\documentclass[a4paper,UKenglish,cleveref, autoref, thm-restate]{lipics-v2021}

\pdfoutput=1 
\hideLIPIcs  

\usepackage[ruled,vlined,boxed]{algorithm2e}
\usepackage[noend]{algpseudocode}
\newlength{\commentWidth}
\let\oldnl\nl
\newcommand{\nonl}{\renewcommand{\nl}{\let\nl\oldnl}}
\SetKwInOut{Input}{Input}\SetKwInOut{Output}{Output}

\newcommand{\LLL}{\mathcal{L}}
\newcommand{\DDD}{\mathcal{D}}
\newcommand{\AAA}{\mathcal{A}}

\newcommand{\MMM}{\mathcal{M}}

\newcommand{\IN}{\texttt{IN}}
\newcommand{\OUT}{\texttt{OUT}}
\newcommand{\pred}{\texttt{pred}}
\newcommand{\succs}{\texttt{succ}}
\DeclareMathOperator{\LEX}{\ensuremath{LEX}}
\DeclareMathOperator{\Pref}{\ensuremath{Pref}}

\newcommand{\minWheeler}{{\normalfont\texttt{MinWheeler}}\xspace}

\usepackage{tikz}
\usetikzlibrary{automata, positioning, arrows, matrix, calc, arrows.meta}

\tikzset{
        ->,  
        >=stealth', 
        node distance=2cm, 
        every state/.style={thick, fill=gray!10}, 
        initial text=start, 
        mycell/.style={draw, minimum width=1cm, minimum height=0.7cm},
        }

\usepackage[normalem]{ulem}

\newtheorem{problem}{Problem}

\usepackage{mathtools}

\usepackage{booktabs}

\usepackage[most]{tcolorbox}

\usepackage{xcolor}

\newcommand{\mathbox}[2]{\colorbox{#1}{$\displaystyle #2$}}

\usepackage{standalone}

\title{On Computing Minimum Wheeler DFA From Their Language}

\author{Ruben Becker}{Ca' Foscari University of Venice, Italy}{rubensimon.becker@unive.it}{https://orcid.org/0000-0002-3495-3753}{}

\author{Davide Cenzato}{Ca' Foscari University of Venice, Italy}{davide.cenzato@unive.it}{https://orcid.org/0000-0002-0098-3620}{}

\author{Nicola Prezza}{Ca' Foscari University of Venice, Italy}{nicola.prezza@unive.it}{https://orcid.org/0000-0003-3553-4953}{}

\author{Daniel Puttini}{Ca' Foscari University of Venice, Italy}{daniel.puttini@unive.it}{https://orcid.org/0009-0006-8401-9949}{}

\authorrunning{R. Becker, D. Cenzato, N. Prezza, and D. Puttini} 

\Copyright{Ruben Becker, Davide Cenzato, Nicola Prezza, and Daniel Puttini} 

\ccsdesc[500]{Theory of computation~Design and analysis of algorithms}
\ccsdesc[300]{Information systems~Data compression}

\keywords{Wheeler Automata, Minimum DFA, Pangenomics, Pattern Matching}

\nolinenumbers 

\EventEditors{Philip Bille, Seth Pettie, and Sabine Storandt}
\EventNoEds{3}
\EventLongTitle{34th Annual European Symposium on Algorithms (ESA 2026)}
\EventShortTitle{ESA 2026}
\EventAcronym{ESA}
\EventYear{2026}
\EventDate{August 31--September 4, 2026}
\EventLocation{L'Aquila, Italy}
\EventLogo{}
\SeriesVolume{388}
\ArticleNo{147}

\usepackage{adjustbox}
\usepackage{booktabs}
\usepackage{varwidth}
\usepackage{graphicx} 
\usepackage{tabularx}
\usepackage{float}

\newcommand{\ArxivOrCr}[2]{#1}

\funding{Funded by the European Union (ERC, REGINDEX, 101039208). Views and opinions expressed are however those of the author(s) only and do not necessarily reflect those of the European Union or the European Research Council Executive Agency. Neither the European Union nor the granting authority can be held responsible for them.}

\begin{document}

\maketitle

\begin{abstract}
Wheeler automata have recently emerged as a powerful generalization of the Burrows-Wheeler Transform, enabling optimal linear-time pattern matching on compressed labeled graphs --- a task that is otherwise computationally hard. Consequently, when an automaton recognizes a Wheeler language (i.e., it is equivalent to some Wheeler automaton), computing its minimum equivalent Wheeler DFA is a powerful indexing strategy. This problem is particularly relevant in computational pangenomics, where pangenome graphs frequently recognize Wheeler languages.

However, constructing the minimum Wheeler DFA for a Wheeler language has remained a computational bottleneck. 
The problem is known to be PSPACE-hard for nondeterministic inputs. When the input is a DFA, state-of-the-art solutions forced a compromise: they were either fast but limited to acyclic DFAs (Alanko et al., SODA 2020) or capable of handling general topologies but prohibitively slow (D'Agostino et al., TCS 2023). In this work, we bridge this gap with the first algorithm solving the problem for general DFAs in near-optimal, linearithmic output-sensitive time. 
By matching the efficiency of acyclic-only solutions while retaining full generality, our approach improves upon the previous general solution by at least a quadratic factor. We demonstrate the practical impact of our algorithm on real-world pangenome graphs; our tool achieves a processing throughput of over $10^5$ transitions per second on a standard workstation, enabling the construction of a provably optimal pattern matching data structure in such applications.
\end{abstract}

\newpage
\setcounter{page}{1}
\section{Introduction}

Pattern matching lies at the heart of computer science, but while we have mastered it on strings, much work remains to be done on labeled graphs. As Equi et al.~\cite{EquiGMT19} proved, pattern matching on general labeled graphs can likely not be solved in strongly subquadratic time; as a result, even basic pattern-matching-related tasks can quickly become prohibitively expensive depending on graph topology and size. This algorithmic bottleneck has long slowed down progress in data-intensive fields like bioinformatics, where pangenomics requires aligning millions of sequencing reads against complex graph-structured collections of genomes.

\emph{Wheeler automata}, introduced by Gagie et al.~\cite{GAGIE201767}, side-step the lower bound of Equi et al.~\cite{EquiGMT19} by imposing a specific structural constraint: informally speaking, their states can be totally sorted according to the co-lexicographic order of the strings labeling the paths reaching them. In the (simpler) case of Wheeler DFA, this means that $u<v$ in Wheeler order for any two states $u,v$ if and only if all (possibly, left-infinite) strings labeling paths entering in $u$ are co-lexicographically smaller than those labeling paths entering in $v$.
This seemingly simple ordering property unlocks a ``best-of-both-worlds'' scenario: it generalizes the celebrated Burrows-Wheeler Transform (BWT)~\cite{burrows1994block} from linear strings to complex graphs, enabling pattern matching in optimal constant time per character---just as if the graph were a simple string.  
Furthermore, Wheeler automata can be encoded in a constant number of bits per transition---again, as if they were simple strings--- and every Wheeler NFA can be converted into a Wheeler DFA of at most twice the size in polynomial time~\cite{alanko2020regular}. The survey \cite{cotumaccio_et_al:OASIcs.Manzini.12} gives an overview of the myriad virtues of Wheeler automata and languages. 

However, a critical gap remains. A regular language is called a \emph{Wheeler language} if it can be recognized by \emph{some} Wheeler NFA. But knowing a language is Wheeler is one thing; actually constructing the minimum Wheeler automaton for it is another. 
This led Alanko et al. \cite{alanko2020regular} to introduce a natural optimization problem: 
\begin{problem}\label{problem:Wheelerization}
    Given the minimum DFA $\DDD$ accepting a Wheeler language $\LLL$, find the minimum Wheeler DFA $\DDD_w$ accepting $\LLL$.
\end{problem}

We emphasize that Problem \ref{problem:Wheelerization} is fundamentally different from that of finding the minimum Wheeler DFA being equivalent to a given \emph{Wheeler} DFA, a problem admitting a linear-time solution \cite{AlankoCP22}.
Furthermore, while (i) deciding if a given NFA is Wheeler and (ii) deciding whether the \emph{language} of a given NFA is Wheeler are both NP-hard problems~\cite{GibneyT19,d2021ordering}, we can check if the language of a DFA is Wheeler in quadratic time~\cite{Becker2023TestWheelerness}. 
This places \cref{problem:Wheelerization} in a fascinating ``sweet spot'': it is tractable enough to determine whether a solution exists~\cite{Becker2023TestWheelerness}, yet complex enough that the optimal target automaton ($\DDD_w$) might be exponentially larger than the input ($\DDD$)~\cite{alanko2020regular}. 
This means that the ultimate goal for this problem is an algorithm whose running time scales favorably with the size of the input plus the solution it produces. 

\subparagraph*{State of the Art.} 

Let $n$ and $m$ denote the number of states and transitions of the input \emph{minimum} DFA $\DDD$, and let $n_w$ and $m_w$ denote the corresponding counts for the output Wheeler DFA $\DDD_w$ in Problem \ref{problem:Wheelerization} (in particular, $n-1 \le m \le m_w$ and $n_w-1 \le m_w$ hold).
Alanko et al.~\cite{alanko2020regular} showed how to solve the problem in $O(m_w \log m_w)$ time, though their solution was restricted to acyclic DFA. Subsequently, D'Agostino et al.~\cite{d2023ordering} proposed the first algorithm capable of handling arbitrary DFA. However, while their method solves the general case, it requires $O(n^3 n_w + n^2 \sigma n_w \log n_w)$ time (where $\sigma$ is the alphabet size), making it computationally prohibitive for large inputs and thus primarily of theoretical interest.

\subparagraph*{Our Contributions and Techniques.} 

In this paper, we establish the following main result:

\begin{restatable}[]{theorem}{maintheorem}
\label{thm:main theorem}
    Given a minimum DFA $\DDD$ accepting a Wheeler language $\LLL$, the minimum Wheeler DFA $\DDD_w$ for $\LLL$---having  $m_w$ transitions---can be computed in $O(m_w \log m_w)$
    time. 
\end{restatable}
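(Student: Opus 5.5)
The plan is to use the known characterization of the minimum Wheeler DFA for a Wheeler language, due to Alanko et al.\ and refined by D'Agostino et al. Its states correspond to the classes of a convex refinement of the Myhill--Nerode equivalence. Concretely, for each state $q$ of $\DDD$, let $I_q$ be the set of words reaching $q$. The co-lexicographic order partitions $I_q$ into maximal intervals that do not interleave with words reaching other states. The states of $\DDD_w$ are exactly these maximal intervals, i.e.\ pairs $(q,J)$ where $J$ is a maximal co-lex-convex block of $I_q$. The transitions of $\DDD_w$ are induced: reading $a$ from the block $(q,J)$ lands in a unique block of $\delta(q,a)$, by Wheelerness of $\LLL$. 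The algorithm therefore has to enumerate these blocks and their transitions, in total $O(m_w\log m_w)$ time.

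\textbf{Step 1: represent each block by its boundaries.} Each block is represented by its co-lexicographic infimum and supremum, which are (possibly left-infinite) strings. In a DFA these can be encoded as ultimately periodic strings: a finite path followed by a cycle. We first compute, for every state $q$ of $\DDD$, the infimum and supremum of $I_q$. This is a min/max-path problem over the reversed automaton, solvable with a Paige--Tarjan style partition refinement in $O(m\log n)$ time, as in existing co-lex ordering algorithms. We then sort these $2n$ boundary strings co-lexicographically. Because $\LLL$ is Wheeler, the sorted sequence yields a ``forest of intervals'': the intervals $[\inf I_q,\sup I_q]$ may overlap, and each overlap point forces a split.

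\textbf{Step 2: generate the blocks of $\DDD_w$ by a forward, BFS-like propagation.} Start from the initial block, which is the single word $\varepsilon$. Whenever a block $(q,J)$ with boundaries $(\alpha,\beta)$ is created, follow each outgoing label $a$. The image $(\delta(q,a),Ja)$ has boundaries $(\alpha a,\beta a)$. Merge this image with other images into the same target state $p$ exactly when no word of another state $p'$ lies strictly between them in co-lex order. To test this in amortized $O(\log m_w)$ per transition, maintain for each label $a$ a balanced search tree. It stores the currently known block boundaries of all states with incoming $a$-transitions, ordered by the rank of the predecessor boundary. This works because co-lex comparison of $\alpha a$ versus $\alpha' a$ reduces to comparing $\alpha$ and $\alpha'$, so the rank of $\alpha a$ follows from that of $\alpha$ with no string comparisons. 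Each transition of $\DDD_w$ then causes $O(1)$ tree operations, which gives $O(m_w\log m_w)$ overall. Termination and the bound on created objects follow from the fact that only final blocks, i.e.\ states of $\DDD_w$, are ever materialized; tentative blocks are charged to transitions of $\DDD_w$.

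\textbf{Main obstacle.} The hardest step is cycles. A block's boundaries may be infinite strings, and a split discovered later can propagate around a cycle. This could in principle force re-splitting blocks that were already emitted, so the process would not be output-sensitive. My plan is to avoid this by first computing the final interval structure on the state-level infima and suprema from Step 1, so that all ``cut points'' are known before the forward propagation starts. The forward pass then only instantiates blocks and never refines them. The key lemma to prove is that every boundary of a block of $\DDD_w$ has the form $\gamma\cdot\omega$, where $\omega$ is a state-level infimum or supremum and $\gamma$ is read along a path. If this holds, comparisons reduce to ranks among $O(n)$ precomputed values plus path extensions, and correctness reduces to showing that the merge rule reproduces exactly the maximal convex classes of the known characterization.
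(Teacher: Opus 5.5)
There is a genuine gap, and it sits exactly at the step you call the main obstacle.

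\textbf{The merge rule needs information you do not have.} Your rule in Step 2 (merge $Ja$ and $J'a$ inside $I_p$ iff no word of another state lies between them) amounts to asking whether some block $K$ lying co-lexicographically between $J$ and $J'$ has an $a$-transition to a state other than $p$. A BFS-like generation discovers blocks by path length, not in co-lex order. Such a $K$ may therefore appear only after $Ja\cup J'a$ has been merged and its out-transitions propagated. The merge must then be undone and the split pushed forward, possibly around cycles.

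\textbf{The precomputation cannot supply it.} Your remedy is to read off ``all cut points'' beforehand from the $2n$ state-level infima and suprema. These strings describe only the hulls $[\inf I_q,\sup I_q]$. They say nothing about how $I_q$ and $I_{q'}$ alternate inside overlapping hulls. There are roughly $n_w$ cuts (one per extra block), and $n_w$ can be exponential in $n$. Already for acyclic $\DDD$, where all infima and suprema are finite words, the cuts are determined by how the many source-to-state paths interleave, not by the $2n$ extreme ones. So the ``forest of intervals'' step is unsupported. With it fall the claims that the forward pass never refines a block and that only final blocks are materialized, and hence your termination argument and the $O(m_w\log m_w)$ bound.

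\textbf{The key lemma does not close the gap.} Your lemma should read $\omega\cdot\gamma$, since co-lex extensions append on the right. It is plausible, but it only constrains the \emph{shape} of candidate boundaries, not \emph{which} candidates are actual cuts. Also, comparing $\omega\gamma$ with $\omega'\gamma'$ is not a rank lookup among the precomputed values; it costs time proportional to the extension.

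\textbf{A smaller error in the characterization.} Your description of the blocks omits the requirement that all words of a block end with the same letter. For $\LLL=(a|b)c$, no word lies between $a$ and $b$, so $I_{q_1}=\{a,b\}$ would be a single block, yet W1 forces two states.

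\textbf{How the paper avoids this.} The paper never computes cut points globally. It keeps a partial WDFA $\DDD_w$ (initially a co-lex sorted spanning tree of $\DDD$) plus a queue $\Delta$, such that $\DDD_w^\Delta$ always recognizes $\LLL$. It resolves each pending $(u,a,v)$ locally:
\begin{itemize}
\item if $p$ or $s$ is Myhill--Nerode equivalent to $v$, the transition is pointed there;
\item otherwise a copy $v'$ of $v$ is inserted immediately after $p$, and the out-transitions of $v'$ are deferred to $\Delta$;
\item in the case $p=s$, it first splits $p$ and redirects its in-transitions from sources larger than $u$, which costs $O(1)$ updates in the dynamic BWT-like representation.
\end{itemize}
Splits thus propagate lazily through the queue, and no emitted state ever has to be re-split. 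Output-sensitivity comes from the invariant that Wheeler-adjacent states with the same incoming letter are never Myhill--Nerode equivalent in $\DDD_w^\Delta$. Combined with accessibility, co-accessibility and language preservation, a pigeonhole argument against the minimum WDFA $\MMM$ shows that at most $n_\MMM$ states are ever created. Alanko et al.'s characterization of minimum WDFA then gives minimality at termination. To rescue your route you would need a comparable mechanism that makes provisional merges safe or cheaply revisable; state-level infima and suprema cannot provide it.
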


As discussed below, the algorithm behind Theorem \ref{thm:main theorem} terminates if and only if the input language $\LLL(\DDD)$ is Wheeler. However, combining the above theorem with the quadratic-time algorithm of Becker et al.~\cite{Becker2023TestWheelerness} for checking whether the language of a given DFA is Wheeler, we obtain an algorithm that terminates for \emph{any input DFA} $\DDD$, either determining that $\LLL(\DDD)$ is not Wheeler, or returning the corresponding minimum WDFA $\DDD_w$. 

Theorem \ref{thm:main theorem} effectively combines the strengths of both prior approaches: we match the near-optimal complexity of Alanko et al.~\cite{alanko2020regular} while also retaining the ability to process any input DFA, as in the algorithm by D'Agostino et al.~\cite{d2023ordering} (but $\Omega(n^2)$ times faster).

\subparagraph*{Experimental results.}
As we demonstrate experimentally, 
in real-world pangenome graphs, $m_w$ is often very close to $n$; this is due to the fact that in those applications the input DFA is very sparse and often not much smaller than the output WDFA. As a result, in those applications, our optimized implementation of the algorithm behind \cref{thm:main theorem} outputs over $10^5$ transitions per second on a standard workstation. These performances are comparable with those of the only other existing tool \cite{GCSA} able to convert a pangenome graph into a (possibly not minimal) equivalent Wheeler DFA. 

\subparagraph*{Intuitive Description of our Results.}

The algorithm that allows us to derive \cref{thm:main theorem} starts by sorting co-lexicographically any spanning tree of $\DDD$ rooted in the source. As observed by Alanko et al. \cite{alanko2020regular}, this yields a spanning Wheeler subgraph $\DDD_w$ of $\DDD$.
Let $\Delta$ be a queue initialized with all transitions of $\DDD$ not included in $\DDD_w$.
While $\Delta\neq \emptyset$, the algorithm inserts those transitions into $\DDD_w$, possibly splitting its states and pushing new transitions into $\Delta$, in such a way that 
$\DDD_w$ remains a WDFA all the time and that
$\DDD_w^\Delta$, i.e., $\DDD_w$ augmented with the transitions in $\Delta$, 
remains equivalent to $\DDD$.
We give a brief overview of how this is achieved.

Let $(u, a, v)$ denote a transition (meaning that state $v$ is reached from state $u$ by reading label $a$). If $\Delta \neq \emptyset$, we extract the transition $(u, a, v)$ at the top of $\Delta$. If adding $(u, a, v)$ to $\DDD_w$ does not violate any Wheeler axiom (Definition \ref{def: wheeler dfa}), then we perform such an addition and proceed with the next transition from $\Delta$. 
Otherwise, let $\mathcal I_v$ denote the set of strings labeling paths entering in state $v$ in the automaton $\DDD_w$ and either originating in the source or being left-infinite (i.e., originating in a loop). 
The Wheeler order $<$ on $\DDD_w$ can be equivalently characterized as follows: for any two states $u\neq v$, it holds $u<v$ if and only if $\mathcal I_u \prec \mathcal I_v$, meaning that all strings in $\mathcal I_{u}$ are co-lexicographically smaller than those in $\mathcal I_v$.
The fact that introducing $(u, a, v)$ violates a Wheeler axiom, means that there exists a state $p\neq v$ such that, after including the new transition, either (1) the co-lexicographic range of $\mathcal I_p$ is contained in that of $\mathcal I_v$, (2) the two ranges intersect but none is contained in the other.

Both cases can be resolved by (possibly) copying states of $\DDD_w$ and changing the destinations of some transitions; here we comment only on case (1) as it is simpler, but a similar technique applies to case (2). 
See also Figure \ref{fig:small example} for a simple example only involving case (1).
The axiom violation of case (1) can be solved by splitting the set $\mathcal I_v$ into two disjoint sets $\mathcal I_{v'}$ and $\mathcal I_{v''}$ such that $\mathcal I_{v'} \prec \mathcal I_p \prec \mathcal I_{v''}$. 
On $\DDD_w$, this can be achieved by creating a new state $v''$. The incoming transitions of $v$ stay unchanged, while the new transition $(u, a, v)$ is renamed into $(u, a, v'')$ and inserted in $\DDD_w$. 
This way, either $v < p < v''$ or $v'' < p < v$ hold in Wheeler order. 
Finally, for each outgoing transition $(v,c,q)$ of $v$, we insert in $\Delta$ a new transition $(v'',c,q)$. 
As a result, $v''$ becomes Myhill-Nerode equivalent to $v$ in the full automaton $\DDD_w^\Delta$. On the other hand, an algorithm invariant makes sure that 
no two adjacent states in Wheeler order are simultaneously Myhill-Nerode equivalent and have the same incoming label; 
as discussed below, this will be important for correctness and completeness.

The algorithm terminates when $\Delta$ becomes empty. Since each step may potentially add transitions to $\DDD_w^\Delta$, termination of the algorithm is not trivial. As a matter of fact, we prove that the algorithm terminates if and only if $\LLL(\DDD)$ is Wheeler; in that case, the algorithm's output $\DDD_w$ is the minimum Wheeler DFA recognizing $\LLL(\DDD)$. It is not hard to explain intuitively why this holds true. We leverage on the fact that, as proved by Alanko et al. in \cite{alanko2020regular}, a trimmed \footnote{Meaning that every state is reachable from the source and can reach at least one final state.} DFA is the minimum WDFA for its language if and only if adjacent states in Wheeler order sharing the same incoming label (unique by the Wheeler axioms), are not Myhill-Nerode equivalent. 
Since\footnote{As a matter of fact, these properties are a simplified description of the algorithm's invariants.} (i)~we start from the minimum DFA for the language, (ii)~our algorithm never inserts state $v$ near state $p$ (in Wheeler order) with $v$ being Myhill-Nerode equivalent to $p$ and sharing its incoming letter, 
(iii)~the Wheeler axioms hold true at every step on the transitions of $\DDD_w$, (iv)~every state is reachable from the source in $\DDD_w$, (v)~every state can reach a final state in $\DDD_w^\Delta$, and (vi)~$\DDD_w^\Delta$ recognizes $\LLL(\DDD)$ at every step, 
Alanko et al.'s characterization~\cite{alanko2020regular} allows us to conclude that, if the algorithm terminates, then $\DDD_w$ is indeed the minimum Wheeler DFA recognizing $\LLL(\DDD)$. 
The claimed complexity follows from the fact that $\DDD_w$ is stored using a dynamic data structure supporting updates in logarithmic time.

To prove that, if $\LLL(\DDD)$ is Wheeler, then the algorithm terminates, consider the minimum WDFA $\MMM$ of $\LLL(\DDD)$, and let $n_{\MMM}$ be the number of its states. 
If, for a contradiction, the algorithm does not terminate, then $\Delta$ is replenished infinitely often. In turn, this implies that the algorithm creates new states of $\DDD_w$ infinitely often. Upon creating the $(n_{\MMM}+1)$-th state, we pick a string $\alpha_i$ labeling a path from the source state to the $i$-th state (in Wheeler order) of $\DDD_w$, for all $i\in \{0,\dots, n_{\MMM}\}$ (this is possible by Property (iv) above). By Properties (v-vi), strings $\alpha_i$ must also belong to the prefix closure of $\LLL(\DDD)$, thereby they must label paths from the source to states $w_0, \dots, w_{n_\MMM}$ of $\MMM$. We finally argue that those states must be distinct, otherwise Property (ii) would imply that $\LLL(\MMM) \neq \LLL(\DDD)$. This means that $\MMM$ has $n_{\MMM}+1$ states, a contradiction.

\begin{figure}[ht!]
    \centering
    \usetikzlibrary{automata, positioning, arrows.meta}

\begin{tikzpicture}[
    shorten >=1pt, node distance=2cm, on grid, auto,
    state/.style={circle, draw, minimum size=0.6cm, inner sep=0pt},
    accepting/.style={double, double distance=1.5pt},
    thick,
    >=Stealth,
    purple_arc/.style={draw=magenta!60, line width=2pt, opacity=0.5}, ,
    blue_arc/.style={draw=cyan!50, line width=2pt, opacity=0.5},
    green_arc/.style={draw=green!60, line width=2pt, opacity=0.5},
    yellow_arc/.style={draw=yellow!80!orange, line width=2pt, opacity=0.6},
    orange_arc/.style={draw=orange!70, line width=2pt, opacity=0.6},
    my_loop/.style={loop, looseness=8, out=60, in=120}
]

\node[anchor=west] at (-1, 1.5) {\textbf{Input:}};
\node[state] (q0) at (0, 0) {$q_0$};
\node[state, node distance=2cm] (q1) [right=of q0] {$q_1$};
\node[state, node distance=2cm, accepting] (q2) [right=of q1] {$q_2$};

\draw[purple_arc] (q0) edge[bend left=20] (q1);
\draw (q0) edge[bend left=20] node[above] {$a$} (q1);

\draw[blue_arc] (q0) edge[bend right=20] (q1);
\draw (q0) edge[bend right=20] node[below] {$b$} (q1);

\draw[purple_arc] (q1) -- (q2);
\draw (q1) -- node[above] {$a$} (q2);

\draw[green_arc] (q2) edge[my_loop] (q2);
\draw (q2) edge[my_loop] node[above] {$b$} (q2);

\begin{scope}[yshift=-4cm] 
    \node[anchor=west] at (-1, 1.5) {\textbf{Initialization:}};
    \node[state] (q0_i) at (0, 0) {$q_0$};
    \node[state] (q1_i) [right=of q0_i] {$q_1$};
    \node[state, accepting] (q2_i) [right=of q1_i] {$q_2$};
    
    \draw[purple_arc] (q0_i) -- (q1_i); \draw (q0_i) -- node[above] {$a$} (q1_i);
    \draw[purple_arc] (q1_i) -- (q2_i); \draw (q1_i) -- node[above] {$a$} (q2_i);
    
    \node[anchor=west] at (-0.5, -1.5) {$\Delta =\{\mathbox{cyan!40}{(q_0, b, q_1)}, \mathbox{green!40}{(q_2,b,q_2)} \}$};
\end{scope}

\begin{scope}[yshift=-8cm]
    \node[anchor=west] at (-1, 1.5) {\textbf{1. process: $\mathbox{cyan!40}{(q_0, b, q_1)}$}};
    \node[state] (q0_1) at (0, 0) {$q_0$};
    \node[state] (q1_1) [right=of q0_1] {$q_1$};
    \node[state, accepting] (q2_1) [right=of q1_1] {$q_2$};
    \node[state] (q1_prime_1) [right=2cm of q2_1] {$q_1'$};
    
    \draw[purple_arc] (q0_1) -- (q1_1); \draw (q0_1) -- node[above] {$a$} (q1_1);
    \draw[purple_arc] (q1_1) -- (q2_1); \draw (q1_1) -- node[above] {$a$} (q2_1);
    \draw[blue_arc] (q0_1) edge[bend right=25] (q1_prime_1);
    \draw (q0_1) edge[bend right=25] node[below] {$b$} (q1_prime_1);
    
    \node[anchor=west] at (-0.5, -2) {$\Delta = \{ \mathbox{green!40}{(q_2, b, q_2)} \} \cup \{ \mathbox{yellow!40}{(q_1', a, q_2)} \}$};
\end{scope}

\begin{scope}[xshift=8cm, yshift=0cm]
    \node[anchor=west] at (-1, 1.5) {\textbf{2. process: $\mathbox{green!40}{(q_2, b, q_2)}$}};
    \node[state] (q0_2) at (0, 0) {$q_0$};
    \node[state] (q1_2) [right=2cm of q0_2] {$q_1$};
    \node[state, accepting] (q2_2) [right=2cm of q1_2] {$q_2$};
    \node[state] (q1_p2) [right=2cm of q2_2] {$q_1'$};
    \node[state, accepting] (q2_p2) [right=2cm of q1_p2] {$q_2'$};
    
    \draw[purple_arc] (q0_2) -- (q1_2); \draw (q0_2) -- node[above] {$a$} (q1_2);
    \draw[purple_arc] (q1_2) -- (q2_2); \draw (q1_2) -- node[above] {$a$} (q2_2);
    \draw[blue_arc] (q0_2) edge[bend right=25] (q1_p2); \draw (q0_2) edge[bend right=25] node[below] {$b$} (q1_p2);
    \draw[green_arc] (q2_2) edge[bend left=25] (q2_p2); \draw (q2_2) edge[bend left=25] node[above] {$b$} (q2_p2);
    
    \node[anchor=west] at (-0.5, -1.5) {$\Delta = \{ \mathbox{yellow!40}{(q_1', a, q_2)} \} \cup \{ \mathbox{orange!40}{(q_2', b, q_2)} \}$};
\end{scope}

\begin{scope}[xshift=8cm, yshift=-4cm]
    \node[anchor=west] at (-1, 1.5) {\textbf{3. process: $\mathbox{yellow!40}{(q_1', a, q_2)}$}};
    \node[state] (q0_3) at (0, 0) {$q_0$};
    \node[state] (q1_3) [right=2cm of q0_3] {$q_1$};
    \node[state, accepting] (q2_3) [right=2cm of q1_3] {$q_2$};
    \node[state] (q1_p3) [right=2cm of q2_3] {$q_1'$};
    \node[state, accepting] (q2_p3) [right=2cm of q1_p3] {$q_2'$};
    
    \draw[purple_arc] (q0_3) -- (q1_3); \draw (q0_3) -- node[above] {$a$} (q1_3);
    \draw[purple_arc] (q1_3) -- (q2_3); \draw (q1_3) -- node[above] {$a$} (q2_3);
    \draw[blue_arc] (q0_3)  edge[bend right=25] (q1_p3); \draw (q0_3) edge[bend right=25] node[below] {$b$} (q1_p3);
    \draw[green_arc] (q2_3)  edge[bend left=25] (q2_p3); \draw (q2_3) edge[bend left=25] node[above] {$b$} (q2_p3);
    \draw[yellow_arc] (q1_p3) -- (q2_3); \draw (q1_p3) -- node[above] {$a$} (q2_3);
    
    \node[anchor=west] at (-0.5, -1.5) {$\Delta = \{ \mathbox{orange!40}{(q_2', b, q_2)} \}$};
\end{scope}

\begin{scope}[xshift=8cm, yshift=-8cm]
    \node[anchor=west] at (-1, 1.5) {\textbf{4. process: $\mathbox{orange!40}{(q_2', b, q_2)}$}};
    \node[state] (q0_4) at (0, 0) {$q_0$};
    \node[state] (q1_4) [right=2cm of q0_4] {$q_1$};
    \node[state, accepting] (q2_4) [right=2cm of q1_4] {$q_2$};
    \node[state] (q1_p4) [right=2cm of q2_4] {$q_1'$};
    \node[state, accepting] (q2_p4) [right=2cm of q1_p4] {$q_2'$};
    
    \draw[purple_arc] (q0_4) -- (q1_4); \draw (q0_4) -- node[above] {$a$} (q1_4);
    \draw[purple_arc] (q1_4) -- (q2_4);\draw (q1_4) -- node[above] {$a$} (q2_4);
    \draw[blue_arc] (q0_4)  edge[bend right=25] (q1_p4);\draw (q0_4) edge[bend right=25] node[below] {$b$} (q1_p4);
    \draw[green_arc] (q2_4)  edge[bend left=25] (q2_p4);\draw (q2_4) edge[bend left=25] node[above] {$b$} (q2_p4);
     \draw[yellow_arc] (q1_p4) -- (q2_4); \draw (q1_p4) -- node[above] {$a$} (q2_4);
    
    \draw[orange_arc] (q2_p4) edge[my_loop] (q2_p4);
    \draw[orange_arc] (q2_p4) edge[my_loop] (q2_p4);
    \draw (q2_p4) edge[my_loop] node[above] {$b$} (q2_p4);
    
    \node[anchor=west] at (-0.5, -2) {$\Delta = \emptyset$};
\end{scope}

\end{tikzpicture}
    \caption{
    The (non-Wheeler) DFA $\DDD$ (top left) is minimum for the Wheeler language $\LLL=(a|b)ab^*$. 
    \textbf{Initialization.}
    We start by choosing a spanning tree $\DDD_w$ rooted at the source, here containing the two pink transitions $(q_0, a, q_1)$ and $(q_1, a, q_2)$. The remaining two transitions $(q_0, b, q_1)$ and $(q_2, b, q_2)$ are pushed in the queue $\Delta$. At any step, we denote with $\DDD_w^\Delta$ the union of $\DDD_w$ and $\Delta$ (at the beginning, $\DDD_w^\Delta = \DDD$).
    \textbf{First step.} The algorithm pops the transition $(q_0, b, q_1)$ from $\Delta$ and processes it. The introduction of $(q_0, b, q_1)$ in $\DDD_w$ would add the string $b$ to the set of strings $\mathcal I_{q_1}$ entering in state $q_1$ in the sub-automaton $\DDD_w$ and the string $ba$ to $\mathcal I_{q_2}$; as a result, the co-lexicographic ranges of $\mathcal I_{q_1}=\{a, b\}$ and $\mathcal I_{q_2}=\{aa,ba\}$ would intersect ($a \prec aa \prec ba \prec b$).
    Hence, the transition is instead introduced pointing to a new state $q_1'$ that acts as a copy of state $q_1$. We now have $\mathcal I_{q_1}=\{a\}$ and $\mathcal I_{q_1'}=\{b\}$ and hence both these sets do not intersect co-lexicographically with $\mathcal I_{q_2}=\{aa\}$. As $q_1$ and $q_1'$ need to be Myhill-Nerode equivalent in the full automaton $\DDD_w^\Delta$, the algorithm needs to add corresponding transitions $(q_1', c, q)$ for all outgoing transitions $(q_1, c, q)$ of $q_1$ in $\DDD_w^\Delta$ to the set of transitions $\Delta$ to be processed. In this particular case, the algorithm adds $(q_1', a, q_2)$ to $\Delta$. 
    \textbf{Second step.}
    Next, the algorithm pops the transition $(q_2, b, q_2)$ from $\Delta$ and processes it. This leads to the creation of a new state $q_2'$; this is required since otherwise $\mathcal I_{q_1'}$ and $\mathcal I_{q_2}$ would intersect co-lexicographically. The transition $(q_2', b, q_2)$ is then added to $\Delta$ in order to make $q_2$ and $q_2'$ Myhill-Nerode equivalent. 
    \textbf{Third and fourth steps.}
    Finally, the transitions $(q_1', a,q_2)$ and $(q_2', b, q_2)$ from $\Delta$ can be simply added to $\DDD_w$ without breaking the Wheeler order, hence $\Delta$ becomes empty and the algorithm outputs the minimum Wheeler DFA $\DDD_w$ (bottom right) accepting $\LLL$.}
    \label{fig:small example}
\end{figure}

\section{Preliminaries}

We denote by $[N]$ the set of integers $\{1,2,\dots,N\}$. 
We recall that a \emph{total order} is a binary relation over a set $U$ that is reflexive, transitive, antisymmetric, and strongly connected. For a total order $\preceq$ on a set $U$, if $x\preceq y$ and $x\neq y$ for two elements $x,y\in U$, we write $x\prec y$.
Let $\Sigma$ be a finite alphabet of size $\sigma$ equipped with a total order $\preceq$. Without loss of generality we assume $\Sigma = [\sigma]$ to be the first $\sigma$ integers with the natural total order. 
Symbol $\Sigma^*$ denotes the set of all finite strings (including the empty string $\epsilon$) that can be formed by concatenating symbols from $\Sigma$.
With a slight abuse of notation, we use $\preceq$ also to denote the \emph{co-lexicographic (or co-lex) order} between strings that is defined (recursively) as follows: For two strings $\alpha,\beta \in \Sigma^*$, we have $\alpha \preceq \beta$ if
(i)~$\alpha = \epsilon$ or if
(ii)~$\alpha = \alpha^\prime a$ and $\beta = \beta^\prime b $ with $a,b \in \Sigma$ and $\alpha',\beta' \in \Sigma^*$ such that $a \prec b$ or $a = b \land \alpha^\prime \preceq \beta^\prime$.

\begin{definition}[DFA]
A \emph{Deterministic Finite Automaton (DFA)} $\mathcal A$ is a 5-tuple
$\AAA = (Q, \Sigma, \delta, q_0, F)$, where $Q$ is a finite set of \emph{states}, $\Sigma$ is a finite alphabet, $\delta : Q \times \Sigma \to Q$ is the \emph{transition function}, $q_0 \in Q$ is the \emph{initial state}, and $F \subseteq Q$ is the set of \emph{final states}. 
\end{definition}

With a slight abuse of notation, we sometimes consider $\delta$ as the set of triples $\{(u,a,v)\in Q\times \Sigma\times Q: v = \delta(u,a)\}$.
We extend the domain of the transition function to strings as usual: for $a \in \Sigma$, $\alpha \in \Sigma^*$, and $q \in Q$ we define $\delta(q,\epsilon)=q$ and $\delta(q, a \alpha) = \delta(\delta(q, a),\alpha)$.

For a state $q \in Q$, we denote with $I_\AAA(q)$ the set of strings \footnote{The set $I_\AAA(q)$ has not to be confused with the set $\mathcal I_q$ mentioned in the introduction, which is instead the union of $I_{\DDD_w}(q)$ and all left-infinite strings reaching state $q$ in $\DDD_w$} \textit{reaching} $q$ from the initial state in $\AAA$, i.e., $I_\AAA(q) := \{\alpha\in\Sigma^* : \delta(q_0, \alpha) = q\}$.
The \textit{language} $\mathcal{L(A)} \subseteq \Sigma^*$ \textit{recognized by} $\AAA$ is the set of strings reaching a final state from the initial state, that is $\mathcal{L(A)} = \bigcup_{q \in F}I_\AAA(q)$. With $\Pref(\LLL(\AAA))$ we denote the set of prefixes of words in $\LLL(\AAA)$. 

A DFA is \emph{accessible} if $I_q\neq \emptyset$ for all states, i.e., every state is reachable by a directed path from the source state. A DFA is instead \emph{co-accessible}, if every state can reach a final state. A DFA that is both accessible and co-accessible is called \emph{trimmed}.

Two strings $\alpha,\beta$ are \emph{Myhill-Nerode equivalent} (MN) in $\AAA$, denoted by $\alpha \equiv_\AAA \beta$, if for all $\gamma \in \Sigma^*$, it holds $\alpha\cdot \gamma \in \LLL(\AAA)$ if and only if $\beta\cdot \gamma \in \LLL(\AAA)$~\cite{myhill1957finite, nerode1958linear}. 
This equivalence relation naturally extends to states: two states $u$ and $v$ of $\AAA$ are Myhill-Nerode equivalent, denoted by $u \equiv_\AAA v$, if for all $\gamma \in \Sigma^*$, it holds that the state $\delta(u,\gamma)$ is final if and only if $\delta(v,\gamma)$ is. We omit the subscript from $\equiv_\AAA$ if it is clear from the context.
The equivalence classes with respect to the Myhill-Nerode equivalence relation (in both variants: strings/states) correspond exactly to the states of the unique minimum DFA that accepts the same language~\cite{myhill1957finite, nerode1958linear}.

\emph{Wheeler automata} have been introduced by 
Gagie et al.~\cite{GAGIE201767} as a tool extending the celebrated Burrows-Wheeler transform \cite{burrows1994block} from strings to labeled graphs. In this paper, we consider only deterministic Wheeler automata. In the following definition, we assume that the initial state $q_0$ is the only one with no incoming transitions (this will simplify our description and, as we show below, comes with no loss of generality). 

\begin{definition}[Wheeler DFA~\cite{GAGIE201767}]
\label{def: wheeler dfa}
A Wheeler DFA (WDFA) is a DFA $\mathcal{A} = (Q, \Sigma, \delta, q_0, F)$ such that $q_0$ is the unique state with no incoming transitions. In addition, the automaton must admit a (unique) total order $\le$ on $Q$, called the Wheeler order, where $q_0$ is the minimum element. This order is defined such that for any two transitions $(u, a, u')$ and $(v, b, v')$, the following Wheeler Properties (a.k.a. \emph{Wheeler Axioms}) hold:
    \begin{enumerate}[{W}1.]
        \item \label{item: wp different letter}
        If $a \prec b$, then $u' < v'$. 
        \item \label{item: wp same letter}
        If $a = b$ and $u < v$, then $u' \le v'$. 
    \end{enumerate}
\end{definition}

The fact that WDFA admit a \emph{unique} Wheeler order is known \cite{alanko2020regular} and significantly simplifies the problem considered in this paper (Wheeler NFA, on the other hand, may admit multiple Wheeler orders).
We notice that Property W\ref{item: wp different letter} implies that a WDFA $\AAA$ is always \emph{input-consistent}, meaning that for all $u,v \in Q$ and $a, b\in \Sigma$, if $\delta(u,a) = \delta(v,b)$, then $a = b$.
On WDFA, we denote by $\lambda_\AAA : Q \to \Sigma$ the function that returns the unique incoming label of a state, that is, for a state $v\in Q \setminus \{q_0\}$ we define $ \lambda_\AAA(v) = a$ if and only if for all $u \in Q$ with $\delta(u,b) = v$ it holds that $b = a$.
For the initial state, we define $\lambda_\AAA(q_0) = \# \not\in \Sigma$, where $\#$ is an artificial letter strictly smaller than all characters in $\Sigma$ according to $\preceq$. 
When $\AAA$ is clear from the context, we drop the subscript and simply write $\lambda$.
The assumption that $q_0$ has no incoming transitions is without loss of generality as one can always modify any regular language $\LLL$ into $\$\LLL$ for a new character $\$$ not belonging to the original alphabet of $\LLL$. As a result, any DFA $\mathcal A$ recognizing $\LLL$ can be converted into one recognizing $\$\LLL$ by just adding to it a new initial state $q_0'$ without incoming transitions and a transition $(q_0',\$,q_0)$. It is not hard to show that this transformation also preserves Wheelerness, if $\mathcal A$ was Wheeler.

The following lemma states that the Wheeler order of the states is consistent with the co-lexicographic order of the strings reaching states.
\begin{lemma}[Lemma 3 in \cite{ConteDCC2023}]
\label{lem: wheeler order}
    Let $\AAA=(Q, \Sigma, \delta, q_0, F)$ be a WDFA with Wheeler order $\le$ and let $u,v\in Q$ and $\alpha\in I_\AAA(u)$ as well as $\beta\in I_\AAA(v)$. Then $u<v$ implies $\alpha \prec \beta$.
\end{lemma}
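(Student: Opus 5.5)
The plan is to prove the contrapositive-free direct statement by induction on the combined length $|\alpha|+|\beta|$, using the Wheeler axioms W1 and W2 together with the assumption that $q_0$ is the minimum of the Wheeler order and the unique state without incoming transitions. Fix $u<v$, $\alpha\in I_\AAA(u)$, $\beta\in I_\AAA(v)$. Since $u<v$, we have $v\neq q_0$: $q_0$ is the minimum element, so nothing lies strictly below it. Also $v$ has an incoming transition, so $\beta\neq\epsilon$; indeed $\beta=\epsilon$ would force $v=\delta(q_0,\epsilon)=q_0$. Write $\beta=\beta'b$ with $v'=\delta(q_0,\beta')$ and $(v',b,v)$ a transition.

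\textbf{Base case $\alpha=\epsilon$.} Then $\alpha\prec\beta$ holds immediately by clause (i) of the co-lex definition, since $\beta\neq\epsilon$.

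\textbf{Case $\alpha\neq\epsilon$.} Write $\alpha=\alpha'a$ with $u'=\delta(q_0,\alpha')$ and $(u',a,u)$ a transition. We compare the last letters.
\begin{itemize}
\item If $a\neq b$: the contrapositive of W1 applied to the transitions $(v',b,v)$ and $(u',a,u)$ says that $b\prec a$ would imply $v<u$, contradicting $u<v$. Hence $a\prec b$, and therefore $\alpha\prec\beta$.
\item If $a=b$: we need $\alpha'\prec\beta'$ strictly. First, $u'\neq v'$, since otherwise determinism gives $u=\delta(u',a)=\delta(v',a)=v$. Second, $v'<u'$ is impossible, because W2 would then give $v\le u$. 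So $u'<v'$, and the induction hypothesis applied to $\alpha'\in I_\AAA(u')$ and $\beta'\in I_\AAA(v')$, whose combined length is smaller, yields $\alpha'\prec\beta'$. Hence $\alpha=\alpha'a\prec\beta'a=\beta$.
\end{itemize}

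The only delicate point is ensuring strictness in the equal-letter case. W2 only gives a non-strict inequality on targets, so here it is used in contrapositive form to order the sources, with determinism excluding $u'=v'$. The induction also needs a minor check that the strict co-lex order is preserved under appending the same letter, which follows directly from clause (ii) of the co-lex definition: if $\alpha'\preceq\beta'$ and $\alpha'\neq\beta'$, then $\alpha'a\preceq\beta'a$ and $\alpha'a\neq\beta'a$.
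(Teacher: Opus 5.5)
Your proof is correct and complete. The strong induction on $|\alpha|+|\beta|$ is the standard argument for this fact: W1 settles the case of distinct last letters, and when the last letters coincide, W2 plus determinism force $u'<v'$, so the induction hypothesis gives $\alpha'\prec\beta'$. The paper does not prove this lemma itself but imports it from Conte et al., so there is no in-paper proof to compare against.

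One cosmetic slip: in the distinct-letter case you apply W1 directly to $(v',b,v)$ and $(u',a,u)$ and then use $u<v$ to exclude $b\prec a$. That is a direct use of W1, not of its contrapositive as you describe it.
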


Alanko et al.~\cite{alanko2020regular} prove the following \textit{Myhill-Nerode theorem} for Wheeler languages.
\begin{theorem}[Minimum Wheeler DFA, Theorem 4.2~in~\cite{alanko2020regular}] 
\label{thm: wheeler minimality}
    Let $\DDD_w$ be a trimmed Wheeler DFA with Wheeler order $\le$ and states $q_0 < q_1 < \dots < q_t$. Furthermore, let $\equiv$ be the Myhill-Nerode equivalence relation among its states. Then, $\DDD_w$ is the minimum Wheeler DFA recognizing $\mathcal{L(D}_w\mathcal{)}$ if and only if $q_{i - 1} \equiv q_{i}$ implies $\lambda(q_{i-1}) \neq \lambda(q_{i})$ for every $i\in [t]$.
\end{theorem}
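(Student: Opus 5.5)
The proof has two directions. For the ``only if'' direction we merge a violating pair of states. For the ``if'' direction we compare $\DDD_w$ with an arbitrary smaller WDFA, using \cref{lem: wheeler order} and a pigeonhole argument.

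\textbf{Only if.} Assume that for some $i\in[t]$ we have $q_{i-1}\equiv q_i$ and $\lambda(q_{i-1})=\lambda(q_i)$. We build a smaller WDFA $\DDD'$ for the same language.
\begin{enumerate}
\item Redirect every transition $(u,a,q_i)$ to $(u,a,q_{i-1})$, delete $q_i$ together with its outgoing transitions, and keep the induced order on the remaining states.
\item Determinism is preserved, since no state gains a second out-transition with the same label.
\item The language is preserved. Every word that passed through $q_i$ now continues from $q_{i-1}$, and $q_{i-1}\equiv q_i$ guarantees the same acceptance behaviour. By induction on the number of redirected transitions used, $\LLL(\DDD')=\LLL(\DDD_w)$.
\item W1 and W2 still hold. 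A redirected transition keeps its label $a=\lambda(q_i)=\lambda(q_{i-1})$, and $q_{i-1}$ is adjacent to $q_i$ in the order. Hence, for every other state $p$, comparing $p$ with $q_{i-1}$ gives the same result as comparing $p$ with $q_i$. For two redirected transitions, the strict inequality demanded by W2 relaxes to equality, which W2 allows.
\item The initial state still has no incoming transitions, and it was not deleted: if $i-1=0$ then $\lambda(q_0)=\#\neq\lambda(q_1)$, so this case cannot occur.
\end{enumerate}
Thus $\DDD'$ is a WDFA with fewer states recognizing $\LLL(\DDD_w)$, so $\DDD_w$ is not minimum.

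\textbf{If.} Assume the condition holds, and suppose $\MMM$ is a WDFA for $\LLL(\DDD_w)$ with at most $t$ states, with initial state $s$.
\begin{enumerate}
\item Since $\DDD_w$ is trimmed, pick $\alpha_k\in I_{\DDD_w}(q_k)$ for each $k\in\{0,\dots,t\}$. By \cref{lem: wheeler order}, $\alpha_0\prec\alpha_1\prec\cdots\prec\alpha_t$.
\item Each $\alpha_k$ lies in $\Pref(\LLL)$, again by trimness. Hence $w_k:=\delta_\MMM(s,\alpha_k)$ is defined (we may restrict to a trimmed $\MMM$).
\item The map $k\mapsto w_k$ is non-decreasing in $\MMM$'s Wheeler order. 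Indeed, if $w_l<w_k$ with $k<l$, then \cref{lem: wheeler order} applied in $\MMM$ gives $\alpha_l\prec\alpha_k$, a contradiction.
\item Since there are $t+1$ strings and at most $t$ states, monotonicity yields an \emph{adjacent} collision $w_{k-1}=w_k$ for some $k\in[t]$. This step, combining pigeonhole with monotonicity to force adjacency, is the crux of the argument.
\item From $w_{k-1}=w_k$ we get $\alpha_{k-1}\equiv\alpha_k$ as strings with respect to $\LLL$. Since $\DDD_w$ is deterministic, this gives $q_{k-1}\equiv q_k$.
\item If $k-1=0$, then $\alpha_0=\epsilon$ and $w_0=s$. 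But $s$ has no incoming transitions, so $\alpha_k$ would also have to be $\epsilon$, which is impossible since $\alpha_k\succ\alpha_0=\epsilon$.
\item Otherwise both $\alpha_{k-1}$ and $\alpha_k$ are nonempty. Their last letters are $\lambda(q_{k-1})$ and $\lambda(q_k)$, and by input-consistency of $\MMM$ both equal $\lambda_\MMM(w_k)$. So $\lambda(q_{k-1})=\lambda(q_k)$ together with $q_{k-1}\equiv q_k$, contradicting the hypothesis.
\end{enumerate}
Therefore every WDFA for $\LLL(\DDD_w)$ has at least $t+1$ states, and $\DDD_w$ is minimum.
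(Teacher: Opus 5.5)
Your proof is correct up to one small omission in the ``only if'' direction.

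\textbf{Relation to the paper.} The paper gives no proof of this theorem; it quotes it from Alanko et al. However, your ``if'' direction is exactly the argument the paper itself runs in the proof of Lemma~\ref{lemma: termination}:
\begin{itemize}
\item choose $\alpha_k\in I_{\DDD_w}(q_k)$ by accessibility and order them via Lemma~\ref{lem: wheeler order};
\item map them into the smaller WDFA through $\Pref(\LLL)$;
\item get an adjacent collision $w_{k-1}=w_k$ from monotonicity plus pigeonhole;
\item use input-consistency of the target automaton to force $\lambda(q_{k-1})=\lambda(q_k)$ together with $q_{k-1}\equiv q_k$.
\end{itemize}
You handle the case $k-1=0$ separately, and correctly. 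There the hypothesis is vacuous because $\lambda(q_0)=\#$, and $\alpha_0=\epsilon$ has no last letter. Your use of the fact that the initial state of $\MMM$ has no incoming transitions settles it; the paper's version leaves this case implicit.

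\textbf{The omission.} Deleting $q_i$ together with its out-transitions can leave other states with no incoming transitions. Any state $x=\delta(q_i,c)$ with $x\neq\delta(q_{i-1},c)$ and no other predecessor becomes unreachable with in-degree zero. For instance, take two adjacent, equivalent, non-final states with the same in-letter, each having its own private $d$-successor. In that situation your $\DDD'$ violates Definition~\ref{def: wheeler dfa}, which requires $q_0$ to be the \emph{unique} state of in-degree zero. Your step~5 only checks that $q_0$ keeps in-degree zero, not that it is the only such state.

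\textbf{The fix.} Replace $\DDD'$ by its accessible part. This works because:
\begin{itemize}
\item deleting states and transitions preserves W1--W2 under the induced order;
\item every surviving non-initial state now has an incoming transition;
\item the language is unchanged;
\item the state count only decreases further.
\end{itemize}
The result is even trimmed: redirecting to an equivalent state preserves co-accessibility, and afterwards no path from a surviving state passes through $q_i$. So the conclusion also holds under the convention that minimality is taken among trimmed automata.
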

Note that Theorem 4.2 in the paper Alanko et al.~\cite{alanko2020regular} does not explicitly mention the assumption that $\DDD_w$ needs to be trimmed, instead this necessary assumption is globally assumed (in Section 1.2) for all automata in their paper.

\section{Formal Description of the Algorithm}
\label{sec:algorithm}
\cref{algorithm: min wheeler} is our solution to \cref{problem:Wheelerization}. 
In this section, we describe all the details of the algorithm, while in the next section, we prove its correctness, completeness, and runtime. 
For convenience, Table~\ref{table:cheatsheet} summarizes the meaning of all symbols used in the algorithm and its analysis.
\begin{table}[ht]
    \centering
    \begin{tabular}{|c|p{8.8cm}|}
    \hline
        $\DDD = (Q=[n],\Sigma, \delta, q_0, F)$ & Input DFA. If $\delta(u,a)$ is not defined, we write $\delta(u,a)=\bot$. We let $\delta(\bot,a)=\bot$.\\\hline
        $\DDD_w = (Q_w, \Sigma, \delta_w, q_0, F_w)$ & Intermediate and output WDFA. At the beginning, $Q_w=Q$, $F_w=F$, and $\delta_w \subseteq \delta$ is a spanning tree of $\DDD$ rooted in $q_0$. Invariants: $\DDD_w$ is Wheeler and accessible. \\\hline
        $\lambda(u)$ & Shorthand for $\lambda_{\DDD_w}(u)$: label of incoming transitions of $u\in Q_w$.\\\hline
        $\Delta$ & Queue with transitions to be added to $\DDD_w$. \\\hline
        $\DDD_w^\Delta = (Q_w, \Sigma, \delta_w \cup \Delta, q_0, F_w)$ & $\DDD_w$ augmented with $\Delta$. Initially, $\DDD_w^\Delta = \DDD$ is the input. At the end, $\DDD_w^\Delta = \DDD_w$ is the algorithm's output. Invariants: $\LLL(\DDD_w^\Delta) = \LLL(\DDD)$ and $\DDD_w^\Delta$ is deterministic and  co-accessible. \\\hline
       $\equiv$  & Myhill-Nerode equivalence relation on $\DDD_w^{\Delta}$. \\\hline
       $\cong$  & Equivalence relation on $\DDD_w^{\Delta}$ computed by the algorithm. Invariant: $\cong$ and $\equiv$ are equal. \\\hline
       $\LEX$ & Wheeler order (permutation) of $Q_w = \{1, \dots, |Q_w|\}$. \\\hline
       $\LEX^{-1}$ & Inverse of $\LEX$. We define $\LEX^{-1}[\bot]=0$.\\\hline
      $\texttt{pred}(u,a)$ & Largest state $v<u$ in Wheeler ($\LEX$) order such that $\delta_w(v,a)\neq \bot$; $\texttt{pred}(u,a)=\bot$ if no such state exists.\\\hline
       $\texttt{succ}(u,a)$ & minimum state $v>u$ in Wheeler ($\LEX$) order such that $\delta_w(v,a)\neq \bot$; $\texttt{succ}(u,a)=\bot$ if no such state exists.\\\hline
       \texttt{insert($u,i$)} & Inserts $u$ at position $i$ in $\LEX$, i.e. \newline $\LEX \coloneqq \LEX[\dots, i-1]\ u\ \LEX[i,\dots]$.\\\hline
       \texttt{copy($q,T,j$)} & Creates a $\cong$-equivalent copy $q'$ of $q$ and inserts it in position $\LEX[j]$. Outgoing transitions of $q$ belonging to $\Delta$ (resp. $\delta_w$) are copied on $q'$ and inserted in $\Delta$ (resp. $T$).\\\hline
    \end{tabular}
    \vspace{5pt}
    \caption{
    Cheatsheet of symbols and functions used in \cref{algorithm: min wheeler}.}
    \label{table:cheatsheet}
\end{table}

Let us recall that, given a minimum DFA $\DDD=(Q, \Sigma,\delta, q_0, F)$ for a Wheeler language $\LLL$, the goal is to construct the minimum Wheeler DFA $\DDD_w$ accepting $\LLL$. 
We assume that states are represented by integers: $Q = [n]$, and that the special symbol $\bot$ (non-existing state) corresponds to integer $\bot = 0$.
\cref{algorithm: min wheeler} is composed of two functions: function \texttt{MinWheeler} (the main algorithm) and the auxiliary function \texttt{copy}. We start from the former.

\paragraph*{Initialization.}
Function \texttt{MinWheeler} begins in Line \ref{line:SortedSpanningTree} by calling \texttt{SortedSpanningTree($\DDD$,$q_0$)}. This function (1) computes a spanning tree of $\DDD =(Q, \Sigma, \delta, q_0, F)$ rooted in $q_0$ (e.g.\ via a DFS visit) and (2) sorts $Q$ according to the colexicographic order of the strings labeling the source-to-nodes paths in the spanning tree. Here, the linear-time algorithm~\cite[Theorem 2]{ferragina2009compressing} can be used. The result is returned as a pair $(\delta_w,\LEX)$, where $\delta_w\subseteq \delta$ is the set of transitions forming the spanning tree and $\LEX[1,n]$ is the permutation of $Q=[n]$ containing the states $Q$ sorted by the colexicographic order of the source-to-state strings in the spanning tree. In other words, $\LEX$ encodes the Wheeler order of the Wheeler subgraph $\DDD_w = (Q,\Sigma,\delta_w,q_0,F_w)$, where $F_w$ is initially set to be equal to $F$ (Line \ref{line:set Fw}). At any point of the execution, $\LEX$ will be a permutation, and we will indicate with $\LEX^{-1}[q]$ the position $i$ such that $\LEX[i]=q$. As a special case, we define $\LEX^{-1}[\bot]=0$, where $\bot$ indicates a non-existing state. 

In Line \ref{line:init delta}, the algorithm initializes the queue $\Delta$ to contain all transitions in $\delta\setminus \delta_w$. At any step of the algorithm's execution, we denote with $\DDD_w^\Delta = (Q,\Sigma,\delta_w \cup \Delta,q_0,F_w)$ the union between $\DDD_w$ and the transitions in $\Delta$. Note that, at this point of the execution, $\DDD_w^\Delta = \DDD$.

The algorithm stores an internal representation of the Myhill-Nerode equivalence relation between states of $\DDD_w^\Delta$ as an integer label indicated with $[u]_{\cong}$, associated to each state $u$; Line~\ref{line: init congr MN} initializes $[u]_{\cong} := u$ to be equal to the original Myhill-Nerode equivalence relation $\equiv$ on $\DDD$ (recall that $\DDD$ is minimum, so each state is a class of  $\equiv$); note that $[\bot]_{\cong} := \bot \notin Q$.
Observe that in our pseudocode we use a symbol ($\cong$) different from $\equiv$ in order to distinguish the relation computed by the algorithm with the ``real'' Myhill-Nerode relation; ultimately, we will prove that $\cong$ and $\equiv$ are the same relation on $\DDD_w^{\Delta}$ before and after every iteration of the \texttt{while} loop, but this distinction will be useful in our proofs. We emphasize that $\equiv$ and $\cong$ are always to be read as equivalence relations on $\DDD_w^{\Delta}$ (and not $\DDD_w$).

\paragraph*{\texttt{while} loop.}

\subparagraph*{Identifying the predecessor and successor of $u$ by letter $a$.}
The \texttt{while} loop begins by popping a transition $(u,a,v)$ from $\Delta$
(Line~\ref{line:pop}). Then, Line~\ref{line:pred} (resp.~\ref{line:succ})
identifies the largest (resp.~smallest) state $\texttt{pred}(u,a)<u$
(resp.~$\texttt{succ}(u,a)>u$) in Wheeler order (i.e., in $\LEX$) having an
outgoing transition labeled with letter $a$ in $\DDD_w$, that is, such that
$p=\delta_w(\texttt{pred}(u,a),a)$ (resp.~$s=\delta_w(\texttt{succ}(u,a),a)$)
is defined. If no such state exists, then $p=\bot$ (resp.~$s=\bot$).

We claim that, by determinism and the Wheeler axioms, if both $p$ and $s$
exist then no state lies strictly between them in Wheeler order; in
particular, either $p=s$ or $p$ and $s$ are adjacent in $\LEX$. Indeed, since
$\DDD_w$ is deterministic, the only $a$-labeled transition leaving $u$ is
$(u,a,v)$, which at this point belongs to $\Delta$ rather than to $\delta_w$;
hence $u$ has \emph{no} outgoing $a$-transition in $\DDD_w$. By definition of
$\texttt{pred}(u,a)$ and $\texttt{succ}(u,a)$, the only state that could lie
strictly between them in Wheeler order and carry an outgoing $a$-transition is
$u$ itself, which carries none. Thus $\texttt{pred}(u,a)$ and
$\texttt{succ}(u,a)$ are \emph{consecutive} among the states with an outgoing
$a$-transition in $\DDD_w$. Now, Wheeler axiom~W1 forces the targets of
$a$-labeled transitions to form a contiguous range of $\LEX$, while W2 makes
the map sending each such source to its $a$-target monotone (non-decreasing)
and onto this range. Two consecutive sources are therefore mapped to two
targets with no state of the range strictly between them, which is precisely
the claim for $p=\delta_w(\texttt{pred}(u,a),a)$ and
$s=\delta_w(\texttt{succ}(u,a),a)$.

In order to satisfy the Wheeler axioms, the edge $(u,v)$ should point to a
state immediately after $p$ (if it exists) and before $s$ (if it exists).

Next, we describe the function \texttt{copy($q,T,j$)} since understanding its behavior is important for the subsequent steps of the algorithm.  

\subparagraph*{Function \texttt{copy($q,T,j$)}.} In Line \ref{line:q'}, the function creates a copy $q'$ of $q$ initialized to be the next available integer $|\LEX|+1$ (at any point, states are consecutive integers: $\{1, \dots, |\LEX|\}$), inserts it in position $j$ of $\LEX$ (Line \ref{line:LEX insert}), 
copies the ``final'' status of $q$ and its Myhill-Nerode equivalence into $q'$ (Lines \ref{line:set final} and \ref{line:set equiv}),
and copies the outgoing transitions of $q$ into $q'$ (Lines \ref{line:modify Delta} and \ref{line:modify T}). 
More in detail, Line \ref{line:modify Delta} copies each outgoing transition $(q,c,x)$ of $q$ belonging to $\Delta$, inserting the copy $(q',c,x)$ in $\Delta$. Line \ref{line:modify T}, on the other hand, takes care of the outgoing transitions of $q$ belonging to $\delta_w$. For each such transition $(q,c,x)\in \delta_w$, the new transition $(q',c,x)$ is inserted in the set $T$ passed \emph{by reference} to the function; this detail is important since the function will be called once by binding $T$ to $\delta_w$, and once by binding it to $\Delta$. 
We can now go back to describing the three cases that can occur in Lines \ref{line:if1}-\ref{line:CopyState2} of function \texttt{MinWheeler}.

\subparagraph*{Case 1: $v \cong p$ or $v \cong s$.} In other words, the \texttt{if} condition at line \ref{line:if1} succeeds. This case is the simplest: Lines \ref{line:choose v'} and \ref{line:insert transition} insert the transition $(u,a,v')$ into $\delta_w$, where $v'$ is the state among $p$ and $s$ being $\cong$-equivalent to $v$ (it could be $v = v'$). This case does not require creating new states, since adding the transition $(u,a,v')$ (equivalent to $(u,a,v)$ from a language perspective) to $\DDD_w$ does not violate any Wheeler axiom (we will prove this later).

\subparagraph*{Case 2: $p \not\cong v \not\cong s$ and $p = s \neq \bot$.} In other words, the \texttt{if} condition at line \ref{line:if1} does not succeed and the \texttt{if} condition at line \ref{line:if2} succeeds. 
We cannot simply move $(u,a,v)$ from $\Delta$ to $\delta_w$:
Wheeler axiom W2 would require $p<v<s$, which in this case is impossible since $v \neq p = s$. 
This issue can be solved by first splitting $p=s$ into two distinct $\cong$-equivalent states $p\neq s$. This is precisely the role of Line \ref{line:CopyState1}, which overwrites variable $s$ with a brand new state with the same outgoing transitions as $p$ by calling function \texttt{copy}. Importantly, this call to \texttt{copy} 
treats differently the outgoing transitions of $p$ belonging to $\Delta$ and those belonging to $\delta_w$. For each $(p,c,x)\in \Delta$ of the former type, a new transition $(s,c,x)$ is inserted into $\Delta$. For each $(p,c,x)\in \delta_w$ of the latter type, a new transition $(s,c,x)$ is inserted into $\delta_w$. This behavior is achieved by binding the formal parameter $T$ of function \texttt{copy} to the actual parameter $\delta_w$.
As we will prove later, this update of $\delta_w$ does not violate any Wheeler axiom and is crucial for the completeness of our algorithm. At this point, Line \ref{line:foreach2} replaces every transition $(x,a,p)\in \delta_w$ such that $u<x$ (in Wheeler order), with transition $(x,a,s)$. In other words, we ``move'' the destinations of all in-transitions of $p$ coming from a state $x$ larger than $u$ to $s$. As we will prove later, this preserves the Wheeler axioms and the language. 

We now have two adjacent (in Wheeler order) states $p < s$ being Myhill-Nerode equivalent, i.e. $p\cong s$. We are left to insert a new state $v'\cong v$ between them and add transition $(u,a,v')$ to $\delta_w$. This is precisely what Lines \ref{line:find i}, \ref{line:CopyState2}, and \ref{line:insert transition} do. 

First, note that symbol $\lambda$ in Line \ref{line:find i} indicates $\lambda_{\DDD_w}$; the subscript omission does not create ambiguity, since this notation is defined only on WDFA, and $\DDD_w$ is the only WDFA here. 
Line \ref{line:find i}, in this particular case ($p$ and $s$ both exist), simplifies to $i\coloneqq 1 + \LEX^{-1}[p]\ (=\LEX^{-1}[s])$ and therefore computes the position of $s$ in $\LEX$. 
To see this, observe that $\{j\ :\ \lambda(\LEX[j])\prec a\}$ is the set of indices in $\LEX$ of all states with incoming transitions (in $\delta_w$) labeled by characters strictly smaller than $a$, hence by Wheeler axiom W1 they precede $p$ (reached by $a$) in $\LEX$; as a result, the $\max$ operator returns $\LEX^{-1}[p]$. Line \ref{line:CopyState2}, with a call to \texttt{copy}, creates a new state $v'$, places it between $p$ and $s$ (so that $p<v'<s$ are adjacent), and copies the outgoing transitions of $v$ into $v'$, inserting them into $\Delta$ (in this case, we cannot add transitions to $\delta_w$ as they could violate some Wheeler axiom; inserting them in $\Delta$ ensures that they will be processed later). 
This is achieved by binding the formal parameter $T$ of \texttt{copy} to the actual parameter $\Delta$.
Finally, Line \ref{line:insert transition} adds $(u,a,v')$ to $\delta_w$.

\subparagraph*{Case 3: $p \not\cong v \not\cong s$ and $(p \neq s \vee s=\bot)$.} 
In other words, both the \texttt{if} conditions at lines \ref{line:if1} and \ref{line:if2} do not succeed.
Since $p \not\cong v \not\cong s$, we must place a $\cong$-equivalent copy $v'$ of $v$ immediately after $p$, if such a state exists (and before $s$, if it exists; note that, if both $p$ and $s$ exist, they must be adjacent in Wheeler order). If $p$ does not exist, then $v'$ has to become the first state in Wheeler order such that $\lambda(v')=a$. In both cases ($p \neq \bot$ or $p = \bot$), Line \ref{line:find i} finds the position $i$ where $v'$ has to be inserted: if $p\neq \bot$ then $\LEX^{-1}[p]$ is larger than all indices in $\{j\ :\ \lambda(\LEX[j])\prec a\}$, thereby Line \ref{line:find i} correctly computes $i \coloneqq 1+\LEX^{-1}[p]$; on the other hand, if $p=\bot$ then $\LEX^{-1}[p] = \LEX^{-1}[\bot] = 0$ and Line \ref{line:find i} computes $i\coloneqq 1+ \max(\{j\ :\ \lambda(\LEX[j])\prec a\} \cup \{0\})$. This means that $i-1$ is the position of the last state in $\LEX$ with incoming letter smaller than $a$ or, if all states in $\LEX$ have incoming letter larger than or equal to $a$, then $i-1 = 0$. Thus, position $i$ is where the first state with incoming letter $a$ should be inserted in $\LEX$ according to Wheeler Axiom~W1. 

After that, Line \ref{line:CopyState2}, with a call to \texttt{copy}, creates a new state $v'$, inserts it in position $i$ of $\LEX$, and copies the outgoing transitions of $v$ into $v'$, inserting them into $\Delta$. Finally, Line \ref{line:insert transition} adds transition $(u,a,v')$ to $\delta_w$.

\begin{algorithm}[!ht]
\LinesNumbered
\DontPrintSemicolon
\SetKwProg{Fn}{Function}{:}{}
\SetKwFunction{fcopy}{copy}
\SetKwFunction{fMinWheeler}{MinWheeler}

\caption{\minWheeler}\label{algorithm: min wheeler}
    \Input{minimum DFA $\DDD$ accepting Wheeler language $\LLL(\DDD)$}
    \Output{minimum Wheeler DFA $\DDD_w$ such that $\LLL(\DDD_w) = \LLL(\DDD)$}
    \medskip

\SetKwProg{Fn}{Function}{}{}
\Fn{\fcopy{$q$,$T$,$j$}: \tcp*[f]{\small $T$ (transitions set) passed by reference}}{
    $q' \coloneqq |\LEX|+1$\label{line:q'}\tcp*[r]{\small $\LEX$, $F_w$, $\Delta$, $\delta_w$: global variables}
    \texttt{insert($q',j$)}\label{line:LEX insert}\;
    \textbf{if} $q \in F_w$ \textbf{then} $F_w \coloneqq F_w \cup \{q'\}$\label{line:set final}\label{line:final q}\;
    $[q']_{\cong} \coloneqq [q]_{\cong}$\label{line:set equiv}\;
    $\Delta \coloneqq \Delta \cup \{(q',c,x)\ :\ (q,c,x) \in \Delta\}$\label{line:modify Delta}\;
    $T \coloneqq T \cup \{(q',c,x)\ :\ (q,c,x) \in \delta_w\}$\label{line:modify T}\;
    \Return{$q'$}\;
}

\Fn{\fMinWheeler{$\DDD =(Q, \Sigma, \delta, q_0, F)$}:}{

    $(\delta_w, \LEX) \coloneqq \texttt{SortedSpanningTree}(\DDD,q_0)$\label{line:SortedSpanningTree}\;

    $F_w \coloneqq F$\label{line:set Fw}\;

    $\Delta \coloneqq \delta \setminus \delta_w$\label{line:init delta}\;

    \textbf{foreach} $u\in Q\cup \{\bot\}$ \textbf{do} $[u]_{\cong} \coloneqq u$ \tcp*[r]{\small $[u]_{\cong}$ is an integer associated with $u$} \label{line: init congr MN}    
    \smallskip    
    
    \While{$\Delta \neq \emptyset$}{ \label{line:while}
        $(u,a,v) \coloneqq \Delta$.pop()\label{line:pop}\;
        $p \coloneqq \delta_w(\texttt{pred}(u,a),a)$\label{line:pred}\tcp*[r]{\small if $\texttt{pred}(u,a)=\bot$ then $p=\bot$ (same for $s$).}
        $s \coloneqq \delta_w(\texttt{succ}(u,a),a)$\label{line:succ}\;
        
        \smallskip

        \uIf{$[v]_{\cong} \in \{[p]_{\cong},[s]_{\cong}\}$\label{line:if1}}{
            \textbf{let} $v'\in\{p,s\}$ \textbf{be such that} $v \cong v'$\label{line:choose v'}\tcp*[r]{\small could be $v'=v$}   
        }\Else{

            \If{$p=s\neq \bot$\label{line:if2}}{
                $s \coloneqq \texttt{copy}(p,\delta_w,\LEX^{-1}[p]+1)$\label{line:CopyState1}\;

                \textbf{foreach} $(x,a,p)\in \delta_w$ s.t. $u < x$ \textbf{do} $\delta_w \coloneqq \left(\delta_w \setminus \{(x,a,p)\}\right) \cup \{(x,a,s)\}$\label{line:foreach2}\;
                
            }
            
            $i \coloneqq 1 + \max \big(\{j\ :\ \lambda(\LEX[j])\prec a\}\cup\{\LEX^{-1}[p]\}\big)$\label{line:find i}\tcp*[r]{$\LEX^{-1}[\bot]=0$}

            $v' \coloneqq \texttt{copy}(v,\Delta,i)$\label{line:CopyState2}\;
            
        }

        $\delta_w \coloneqq \delta_w \cup \{(u,a,v')\}$\label{line:insert transition}\;

    }
    \Return{$\DDD_w = (\{1,\dots,|\LEX|\}, \Sigma, \delta_w, q_0, F_w)$}\label{line:return}\;
}

\end{algorithm}

\section{Analysis: Correctness, Completeness, and Runtime}
\label{sec:analysis}

In this section we prove the correctness, completeness, and the bound on the runtime of Algorithm~\ref{algorithm: min wheeler}. In what follows, we fix an input DFA $\DDD = (Q, \Sigma, \delta, q_0, F)$ that is minimum for the Wheeler language $\LLL$ that it accepts and denote with $\DDD_w$ the output DFA of \minWheeler. Our goal is to prove the following theorem from the introduction, restated here.

\maintheorem*

The proof of \cref{thm:main theorem} relies on three main results. In \cref{lemma: termination} we show that the algorithm always terminates (assuming that $\LLL$ is a Wheeler language). By \cref{lemma: correctness refactored}, the output automaton $\DDD_w$ is deterministic, recognizes the same language $\LLL$, and is the minimum Wheeler DFA that accepts $\LLL$. Finally at the end of this, we show how to implement \minWheeler using dynamic data structures achieving the above claimed running time.

To facilitate the proofs of completeness and correctness, we first state the invariants maintained by the algorithm at the beginning of each iteration of the while loop. Due to space limitations, the proof of the lemma is deferred to Appendix~\ref{sec: invariant_proofs}.

\begin{restatable}[Invariants]{lemma}{invariantsrefactored}
\label{lemma: invariants refactored}
At any step of Algorithm~\ref{algorithm: min wheeler}, let us denote:
\begin{itemize}
    \item $\DDD_w = (Q_w=\LEX, \Sigma, \delta_w, q_0, F_w)$,
    \item $\DDD_w^\Delta = (Q_w=\LEX, \Sigma, \delta_w \cup \Delta, q_0, F_w)$, and
    \item $u \cong v$, 
    with $u,v \in Q_w$, if and only if $[u]_{\cong} = [v]_{\cong}$.
\end{itemize}
The following invariants hold before and after every iteration of the \texttt{while} loop of Algorithm~\ref{algorithm: min wheeler}:
\begin{enumerate}[(1)]
    \item \label{Loop Invariant: Deterministic}
    \textbf{Determinism:} $\DDD_w^\Delta$ is deterministic.
    \item \label{Loop Invariant: Connectivity}
    \textbf{Accessibility:} The automaton $\DDD_w$ is accessible.
    \item \label{Loop Invariant: wheeler order}
    \textbf{Wheeler Order:} $\DDD_w$ is Wheeler and $\LEX$ encodes its Wheeler order.
    \item \label{Loop Invariant: MN-equiv} \textbf{MN Equivalence:} $\equiv$ and $\cong$ are the same equivalence relation in the automaton $\DDD_w^\Delta$. 
    \item \label{Loop Invariant: language}
    \textbf{Language: } $\LLL(\DDD_w^\Delta) = \LLL(\DDD)$.
    \item \label{Loop Invariant: min order}
    \textbf{Wheeler-minimality:} 
    For any $i \in \{1,\dots, |\LEX|-1\}$, if $\LEX[i] \equiv \LEX[i+1]$ in $\DDD_w^\Delta$, then $\lambda_{\DDD_w}(\LEX[i]) \neq \lambda_{\DDD_w}(\LEX[i+1])$.
    \item \label{Loop Invariant: co-accessible}
    \textbf{Co-accessibility:} The automaton $\DDD_w^\Delta$ is co-accessible. 
    \end{enumerate}
\end{restatable}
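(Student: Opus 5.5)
We argue by induction on the number of iterations of the \texttt{while} loop. For the base case, right after initialization $\DDD_w^\Delta=\DDD$, so (1), (5) and (7) are immediate; (7) holds because $\DDD$ is minimum, hence trimmed.
\begin{itemize}
\item Invariant (2) holds since $\delta_w$ is a spanning tree rooted in $q_0$.
\item Invariant (3) holds because $\LEX$ sorts the states by the co-lex order of their unique tree strings. A spanning tree is always Wheeler under this order: W1 is immediate, and W2 follows because $\alpha\prec\beta$ implies $\alpha a\prec\beta a$.
\item Invariant (4) holds since $[u]_\cong=u$ and $\DDD$ is minimum, so $\equiv$ is the identity.
\item Invariant (6) holds vacuously, since no two distinct states are $\equiv$-equivalent.
\end{itemize}

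For the inductive step we fix an iteration popping $(u,a,v)$ and treat the three cases of the algorithm separately. We verify the ``language'' invariants (1), (4), (5), (7) and the ``order'' invariants (2), (3), (6) on the new automata.

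The language invariants rest on one observation. Every call $\texttt{copy}(q,T,j)$ creates a state $q'$ with the same finality and the same outgoing transitions as $q$ in $\DDD_w^\Delta$, because copies of $\delta_w$-edges go to $T$ and copies of $\Delta$-edges go to $\Delta$, and $T\subseteq\delta_w\cup\Delta$ in both calls. Hence $q'\equiv q$, and $\equiv$ on the old states is unchanged. Moreover, the only edge redirections are $v\mapsto v'$ and $p\mapsto s$ with $v'\equiv v$ and $s\equiv p$.

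Redirecting an edge to an equivalent state preserves both the language and the right-languages of all states. This is a standard simultaneous induction on the suffix $\gamma$, or equivalently the relation ``$x\sim x'$ iff $x\equiv x'$ in the old automaton'' is a bisimulation between the old and new automata. From this, (4) and (5) follow, with $[q']_\cong:=[q]_\cong$ matching the new $\equiv$.
\begin{itemize}
\item Determinism (1) holds because $u$ loses its $\Delta$-edge on $a$ and gains exactly one $\delta_w$-edge on $a$, and copies inherit determinism.
\item Co-accessibility (7) holds because every state's right-language is preserved, and new states copy a co-accessible state.
\item Accessibility (2) holds because every new state receives an incoming edge ($v'$ from $u$, and $s$ from the redirected $x>u$ or from nothing?). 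Here some care is needed: in Case 2, if no $x>u$ reaches $p$ by $a$, then $s=\succs$-target would not exist, a contradiction since $p=s$ came from $\succs(u,a)$. So $\succs(u,a)>u$ is redirected to $s$, and $p$ keeps $\pred(u,a)$.
\end{itemize}

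The main obstacle is (3) and (6), i.e.\ showing that the new $\LEX$ is a Wheeler order of the new $\delta_w$ and that no bad adjacency arises. We use the adjacency claim proved in the text: $p$ and $s$ are equal or adjacent, and the $a$-targets form a contiguous monotone range.

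\textbf{Case 1.} Adding $(u,a,v')$ with $v'\in\{p,s\}$ keeps monotonicity, since $\pred(u,a)<u<\succs(u,a)$ map to $p\le v'\le s$. Nothing else changes.

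\textbf{Case 2.} The new state $s$ sits right after $p$. Inserting $s$ creates no conflict with labels, since $\lambda(s)=a=\lambda(p)$. Sources $x>u$ are redirected to $s$ and sources $x<u$ stay at $p$, so the monotonicity of $a$-edges is kept. The copied out-edges of $s$ must satisfy W2 against those of $p$. Since $p$ and $s$ are adjacent and have identical $c$-targets for every $c$, W2 holds for every other source: any $y<p$ is also $<s$, and any $y>s$ is also $>p$. Then $v'$ is inserted between $p$ and $s$ with the single incoming edge $(u,a,v')$, which fits monotonicity, and $v'$ has no $\delta_w$ out-edges, so W2 is not affected.

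\textbf{Case 3.} The new state $v'$ goes at position $i$, which by the discussion of Line \ref{line:find i} is right after $p$ (or first among the $a$-targets), and it has no out-edges in $\delta_w$. So W1 and W2 hold.

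Uniqueness of the Wheeler order then identifies $\LEX$ as the Wheeler order.

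For (6), the new adjacencies with equal incoming label $a$ involve $v'$ with $p$ or $s$. These pairs are not equivalent, since $v\not\cong p$ and $v\not\cong s$ and (4) holds. In Case 2, the pair $p,s$ is no longer adjacent after $v'$ is inserted between them. Adjacencies broken by insertions only remove constraints, and old pairs keep their equivalence status because $\equiv$ is preserved on old states. Labels $\lambda$ of old states are unchanged because redirections keep the label $a$. The remaining subtlety to check carefully is that redirection in Case 2 does not leave $p$ without in-edges, which was settled above via $\pred(u,a)$.
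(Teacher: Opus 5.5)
\textbf{Accessibility (2) in Case 2 has a genuine gap.} The rest of your proof is sound, and two parts are cleaner than the paper's:
\begin{itemize}
\item You handle (1), (4), (5), (7) uniformly through preservation of right-languages. Co-accessibility then just says that every right-language stays nonempty, which replaces the path-length induction of part (3) of Lemma~\ref{lem: modify}.
\item You check W2 as monotonicity of the $a$-target map, instead of the paper's 16-case split. There you should add that the \texttt{foreach} at Line~\ref{line:foreach2} treats $p$ and $s$ alike: they lie on the same side of $u$ unless $p=u$, and in that case $p$ has no $a$-out-edge in $\delta_w$.
\end{itemize}

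The problem is that this \texttt{foreach} \emph{removes} the edges $(x,a,p)$ with $x>u$ from $\delta_w$. Take a state whose access paths in the old $\DDD_w$ all enter $p$ through such an edge and then leave $p$; it is no longer reached along those paths. Checking that every new state, and $p$, still has \emph{some} incoming edge does not address this, because an incoming edge helps only if its source is reachable.

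Your claim that $p$ ``keeps $\pred(u,a)$'' can even be empty. Suppose $(p,a,p),(y,a,p)\in\delta_w$ with $p<u<y=\succs(u,a)$, so that $\pred(u,a)=p$. Then the kept edge is the self-loop, and after the iteration $p$ is reachable only through the copied edge $(s,a,p)$.

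The missing idea is why the first \texttt{copy} binds $T$ to $\delta_w$: $s$ carries copies of all $\delta_w$-out-edges of $p$. Hence every old access path lifts to the new $\DDD_w$ by visiting $s$ instead of $p$ whenever $p$ was entered through a redirected edge. This keeps every old state other than $p$ reachable. Then:
\begin{itemize}
\item $p$ is reached from $\pred(u,a)$ if $\pred(u,a)\neq p$;
\item $s$ is reached from $\succs(u,a)$;
\item if $\pred(u,a)=p$, then $p$ is reached from $s$ via $(s,a,p)$;
\item $v'$ is reached from $u$.
\end{itemize}
This rerouting is the core of the paper's proof of (2). Invariant (2) is also what shows that $q_0$ remains the only state without incoming edges, which is part of (3) and which you never check.

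\textbf{A smaller omission is in (6).} In Case 2 the new order reads $\dots,p,v',s,r,\dots$, where $r$ was the right neighbour of $p$. Besides $\{p,v'\}$ and $\{v',s\}$ there is the new adjacency $\{s,r\}$, which you never examine and which may well have $\lambda(r)=a=\lambda(s)$. It is settled by $s\equiv p$ and $\lambda(s)=\lambda(p)$, which reduce it to the old pair $\{p,r\}$; this is what the paper does.

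Likewise, in Case 3 with $p=\bot$ or $s=\bot$, the neighbours of $v'$ are not $p$ or $s$. So your assertion that every new equal-label adjacency involves $p$ or $s$ needs an argument:
\begin{itemize}
\item for $p=\bot$, the left neighbour of $v'$ has label $\prec a$ by the choice of $i$;
\item for $s=\bot$, an $a$-edge into $\LEX[i+1]$ would have to come from some $x\le u$, which W2 together with $(u,a,v')$ and determinism rule out.
\end{itemize}
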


\paragraph*{Completeness.}
Using the above invariants, we now prove that the algorithm always terminates for a Wheeler language $\LLL(\DDD)$ in input. 

\begin{lemma}[Completeness]
    \label{lemma: termination}
    \cref{algorithm: min wheeler} terminates for every input DFA $\DDD$ that is minimum for a Wheeler language $\LLL(\DDD)$.
\end{lemma}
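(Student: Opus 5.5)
The proof will be by contradiction, following the sketch in the introduction. Let $\MMM$ be the minimum WDFA for $\LLL=\LLL(\DDD)$; it exists since $\LLL$ is Wheeler. Let $n_\MMM$ be its number of states, and take $\MMM$ trimmed. Suppose the algorithm does not terminate.

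\textbf{Step 1: the number of states is unbounded.} Each iteration that does not create a state (Case 1) moves one transition from $\Delta$ to $\delta_w$ and adds nothing to $\Delta$. Only \texttt{copy} adds transitions, and each copy adds at most $\sigma$ of them to $\Delta\cup\delta_w$. If only finitely many states were ever created, there would be a last state creation. After it, every iteration shrinks $\Delta$ by one, so the loop would stop, which is a contradiction. Hence at some iteration $|\LEX|\ge n_\MMM+1$, and we fix the configuration at the start of that iteration.

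\textbf{Step 2: mapping states of $\DDD_w$ to states of $\MMM$.} Write $\LEX[1..n_\MMM+1]$ as $x_1<\dots<x_{n_\MMM+1}$. By Invariant~(\ref{Loop Invariant: Connectivity}) pick $\alpha_i\in I_{\DDD_w}(x_i)$. By Lemma~\ref{lem: wheeler order} and Invariant~(\ref{Loop Invariant: wheeler order}) we get $\alpha_1\prec\dots\prec\alpha_{n_\MMM+1}$. Since $\delta_w\subseteq\delta_w\cup\Delta$, $\alpha_i$ also reaches $x_i$ in $\DDD_w^\Delta$. By co-accessibility (Invariant~(\ref{Loop Invariant: co-accessible})) and the language invariant~(\ref{Loop Invariant: language}), $\alpha_i\in\Pref(\LLL)$. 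So in the trimmed $\MMM$, $\alpha_i$ reaches a state $w_i$. By pigeonhole, $w_i=w_j$ for some $i<j$.

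\textbf{Step 3: the consecutive pair.} Since $\MMM$ is Wheeler, $I_\MMM(w_i)$ is a co-lex interval of $\Pref(\LLL)$ (by Lemma~\ref{lem: wheeler order}, the sets $I_\MMM$ of distinct states are co-lex separated and partition $\Pref(\LLL)$). Choose $k$ with $i\le k<j$ and look at $x_k,x_{k+1}$. We have $\alpha_i\preceq\alpha_k\prec\alpha_{k+1}\preceq\alpha_j$, and $\alpha_k,\alpha_{k+1}\in\Pref(\LLL)$, so both lie in the interval $I_\MMM(w_i)$ and $w_k=w_{k+1}=w_i$. Consequently $\alpha_k\equiv\alpha_{k+1}$ with respect to $\LLL$. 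Because $\DDD_w^\Delta$ is deterministic (Invariant~(\ref{Loop Invariant: Deterministic})) and recognizes $\LLL$, this gives $x_k\equiv x_{k+1}$ in $\DDD_w^\Delta$.

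Both states also have the same incoming letter. All strings reaching $w_i$ in $\MMM$ end with the same letter, namely $\lambda_\MMM(w_i)$, and by input-consistency of $\DDD_w$ the last letter of $\alpha_k$ is $\lambda(x_k)$. The special case $w_i=q_0$ would force $\alpha_k=\alpha_{k+1}=\epsilon$, which is impossible for distinct states. Thus $\lambda(x_k)=\lambda(x_{k+1})$ and $x_k\equiv x_{k+1}$, contradicting Invariant~(\ref{Loop Invariant: min order}). Hence the algorithm terminates.

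The main obstacle is Step 3, specifically the interval/convexity property of $I_\MMM(w)$ within $\Pref(\LLL)$. This is what lets me reduce an arbitrary colliding pair $i<j$ to an adjacent pair $k,k+1$. I would derive it from Lemma~\ref{lem: wheeler order} applied to $\MMM$. If $\beta\in\Pref(\LLL)$ lies co-lex between two strings of $I_\MMM(w)$ but reaches some $w'\ne w$, then $w<w'$ or $w'<w$ forces $\beta$ to be co-lex outside one of the two, which is a contradiction. Step 1 also needs care about counting, in particular that Case 2 creates two states per iteration; this only helps.
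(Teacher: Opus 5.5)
Your proof is correct and follows essentially the same route as the paper. Non-termination forces more than $n_\MMM$ states; representatives $\alpha_i$ of a contiguous prefix of $\LEX$ are mapped into the minimum WDFA $\MMM$; and an adjacent collision with equal incoming letters contradicts Invariant~(\ref{Loop Invariant: min order}). The differences are cosmetic: the paper gets the adjacent collision directly from the monotone chain $w_0\le_\MMM\dots\le_\MMM w_{n_\MMM}$ given by Lemma~\ref{lem: wheeler order}, whereas you take an arbitrary collision by pigeonhole and then use the co-lex convexity of $I_\MMM(w)$ (an equivalent consequence of the same lemma) to move it to a consecutive pair. Your explicit treatment of the initial-state corner case and of $|\LEX|$ possibly growing by two in one iteration is slightly more careful than the paper's text.
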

\begin{proof}
    Let $\MMM=(Q_\MMM, \Sigma, \delta_\MMM, q_\MMM^0, F_\MMM)$ be the unique minimum Wheeler DFA recognizing $\LLL(\DDD)$, and let $n_\MMM$ be the number of its states. 
    Assume, for a contradiction, that the algorithm does not terminate. Since each iteration of the \texttt{while} loop in line~\ref{line:pop} removes a transition from $\Delta$, non-termination implies that $\Delta$ is replenished infinitely via calls to \fcopy{}. Because each such call adds a new state to $\LEX$, the number of states $|\LEX|$ must eventually exceed $n_\MMM$. Consider the iteration when $|\LEX| = n_\MMM + 1$ and let $\LEX = [v_0, v_1, \dots, v_{n_\MMM}]$. By Invariant~\ref{Loop Invariant: Connectivity}, each state $v_i$ is reachable from $q_0$ in $\mathcal{D}_w$, implying that the set $I_{\DDD_w}(v_i) = \{ \alpha \in \Sigma^* : \delta_w(q_0, \alpha) = v_i \}$ is non-empty for every $i\in \{0,\ldots, n_\MMM\}$. Hence, there exist $\alpha_i\in I_{\DDD_w}(v_i)$ for every $i\in \{0,\ldots, n_\MMM\}$. Furthermore, by Invariant~\ref{Loop Invariant: wheeler order} the DFA $\DDD_w$ is Wheeler and thus Lemma~\ref{lem: wheeler order} yields that $\alpha_i \prec \alpha_j$ for all $i<j$.
    As $\DDD_w^\Delta$ is co-accessible due to Invariant~\ref{Loop Invariant: co-accessible}, it follows that $\alpha_i\in \Pref(\LLL(\DDD_w^\Delta))$ for all $i\in \{0,\ldots, n_\MMM\}$. As $\LLL(\DDD_w^\Delta) = \LLL(\DDD)$ due to Invariant~\ref{Loop Invariant: language}, we furthermore have $\alpha_i\in \Pref(\LLL(\DDD))$. Hence also the minimum Wheeler DFA $\MMM$ for $\LLL(\DDD)$ has to contain states $w_0,\ldots, w_{n_\MMM}$ such that $\alpha_i\in  I_{\MMM}(w_i)$. Furthermore $\alpha_0 \prec \ldots \prec \alpha_{n_w}$ and Lemma~\ref{lem: wheeler order} imply that $w_0 \le_\MMM \ldots \le_\MMM w_{n_\MMM}$, where $\le_\MMM$ is the unique Wheeler order on $\MMM$. As $\le_\MMM$ is a total order and $\MMM$ has $n_\MMM$ states, we must have $w_{i - 1} = w_{i}$ for some $i\in [n_\MMM]$. Wheeler axiom W1 implies that $\MMM$ is input consistent and as $\delta_\MMM(q_\MMM^0,\alpha_{i - 1})=\delta_\MMM(q_\MMM^0, \alpha_i)$, the two strings $\alpha_{i-1}$ and $\alpha_i$ have to end with the same letter. As also $\delta_w(q_0, \alpha_{i-1})= v_{i-1}$ and $\delta_w(q_0, \alpha_{i})= v_{i}$, this implies $\lambda_{\DDD_w}(v_{i-1}) = \lambda_{\DDD_w}(v_i)$.
    Invariant~\ref{Loop Invariant: min order} now implies that $v_{i - 1} \not \equiv v_i$ in $\DDD_w^\Delta$. Hence there exists $\alpha\in\Sigma^*$ such that $|\{\alpha_{i - 1}\alpha, \alpha_i\alpha\} \cap \LLL(\DDD_w^\Delta)|=1$. Notice however that $\delta_\MMM(q_0, \alpha_{i - 1}\alpha) = \delta_\MMM(q_0, \alpha_{i}\alpha)$ and hence $|\{\alpha_{i - 1}\alpha, \alpha_i\alpha\} \cap \LLL(\MMM)|\in \{0, 2\}$. As $\LLL(\MMM)=\LLL(\DDD)=\LLL(\DDD_w^\Delta)$, this is a contradiction. Hence, the algorithm terminates.
\end{proof}

\paragraph*{Correctness.}
Having established termination, we prove that the resulting automaton is indeed the minimum Wheeler DFA for the target language.

\begin{lemma}[Correctness]
    \label{lemma: correctness refactored}
    The automaton $\DDD_w$ returned by \cref{algorithm: min wheeler} is the minimum Wheeler DFA recognizing $\LLL(\DDD)$.
\end{lemma}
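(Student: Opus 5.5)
The plan is to apply the invariants of Lemma~\ref{lemma: invariants refactored} at termination, which happens by Lemma~\ref{lemma: termination}, and then invoke the characterization of Theorem~\ref{thm: wheeler minimality}. The loop exits exactly when $\Delta=\emptyset$. At that moment $\DDD_w^\Delta$ and $\DDD_w$ coincide as automata, since they have the same states, the same transitions $\delta_w\cup\emptyset=\delta_w$, and the same initial and final states. So every invariant stated for $\DDD_w^\Delta$ transfers verbatim to the returned automaton $\DDD_w$. I would first make this identification explicit, and also observe that the returned state set $\{1,\dots,|\LEX|\}$ equals $Q_w$, because \texttt{copy} always allocates the next consecutive integer.

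Next I collect the properties of $\DDD_w$. Invariant~\ref{Loop Invariant: Deterministic} makes $\DDD_w$ a DFA. Invariant~\ref{Loop Invariant: wheeler order} makes it Wheeler with Wheeler order encoded by $\LEX$. Invariant~\ref{Loop Invariant: language} gives $\LLL(\DDD_w)=\LLL(\DDD)$. Invariants~\ref{Loop Invariant: Connectivity} and~\ref{Loop Invariant: co-accessible} together make $\DDD_w$ trimmed. Trimmedness is exactly the hypothesis required by Theorem~\ref{thm: wheeler minimality}, so this is where the co-accessibility invariant, which was stated for $\DDD_w^\Delta$ rather than $\DDD_w$, becomes usable only because $\Delta$ is empty.

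Finally I check the minimality condition of Theorem~\ref{thm: wheeler minimality}. List the states in Wheeler order as $\LEX[1]<\dots<\LEX[t+1]$, where $\LEX[1]=q_0$ because $q_0$ is the minimum of the Wheeler order. Invariant~\ref{Loop Invariant: min order} states that if $\LEX[i]\equiv\LEX[i+1]$ then $\lambda(\LEX[i])\neq\lambda(\LEX[i+1])$. Here $\equiv$ is Myhill--Nerode equivalence in $\DDD_w^\Delta=\DDD_w$, and $\lambda$ is computed in $\DDD_w$. This is precisely the condition of the theorem, so $\DDD_w$ is the minimum WDFA for $\LLL(\DDD_w)=\LLL(\DDD)$.

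The only delicate point is the requirement in Definition~\ref{def: wheeler dfa} that $q_0$ be the unique state without incoming transitions. Accessibility in $\DDD_w$ gives every other state an incoming edge. It remains to show that no transition into $q_0$ is ever created. The input satisfies this by the $\$$-assumption. Moreover, every inserted transition targets $p$, $s$, a copy of some state, or an existing target of $a$, and none of these can be $q_0$, since by W1 the state $q_0$ sits first with $\lambda(q_0)=\#$. If Invariant~\ref{Loop Invariant: wheeler order} is read as already including this, the step is immediate. Beyond this, the argument is bookkeeping, since all the real work is in the invariants.
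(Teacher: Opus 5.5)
Your proposal is correct and follows the paper's proof essentially verbatim: at termination $\Delta=\emptyset$ gives $\DDD_w=\DDD_w^\Delta$, the invariants then yield determinism, Wheelerness with order $\LEX$, language equality and trimmedness, and the Wheeler-minimality invariant is exactly the condition of Theorem~\ref{thm: wheeler minimality}. Your extra concern about $q_0$ being the unique state without incoming transitions is already covered by the Wheeler-order invariant, because Definition~\ref{def: wheeler dfa} includes that requirement and the paper's maintenance proof checks it explicitly in each of the three cases.
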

\begin{proof}
    Recall that $\DDD_w = (\LEX, \Sigma, \delta_w, q_0, F_w)$ and $\DDD_w^\Delta = (\LEX, \Sigma, \delta_w \cup \Delta, q_0, F_w)$. At termination, since $\Delta$ is empty, then the algorithm returns $\DDD_w = \DDD_w^\Delta$. Invariant~\ref{Loop Invariant: Deterministic} implies that $\DDD_w$ is deterministic and Invariants~\ref{Loop Invariant: Connectivity} and~\ref{Loop Invariant: co-accessible} imply that  $\DDD_w = \DDD_w^\Delta$ is both accessible and co-accessible and thus trimmed. Invariant~\ref{Loop Invariant: language} furthermore implies that $\LLL(\DDD_w)=\LLL(\DDD)$ and from Invariant~\ref{Loop Invariant: wheeler order} it follows that $\DDD_w$ is Wheeler and $\LEX$ encodes its Wheeler order. Hence, $\DDD_w$ is a trimmed WDFA that accepts $\LLL(\DDD)$. \cref{thm: wheeler minimality} thus yields that $\DDD_w$ is the minimum  Wheeler DFA for $\LLL(\DDD)$ if and only if, for any two subsequent states $u$ and $v$, in its Wheeler order (that is encoded by $\LEX$), $u \equiv v$ implies $\lambda_{\DDD_w}(u) \neq \lambda_{\DDD_w}(v)$. This is exactly what Invariant~\ref{Loop Invariant: min order} states and hence this concludes the proof.
\end{proof}

\paragraph*{Runtime.} We now describe the data structures employed by Algorithm~\ref{algorithm: min wheeler} to construct the minimum Wheeler DFA $\DDD_w = (Q_w,\Sigma, \delta_w, p_0, F_w)$ within $O(m_w\log m_w)$ time. 

The algorithm maintains the following data structures: a read-only representation of the input DFA $\DDD$ supporting navigation queries in time $O(\log m) \subseteq O(\log m_w)$ (an adjacency list representation where the child labeled $a\in \Sigma$ of a given node $u$ can be found by binary search on the adjacency list of $u$); a queue $\Delta$ supporting push and pop operations in constant time; 
a dynamic set $F_w$ (a self-balancing tree);
a simple resizable array implementing $[\cdot]_{\cong}$;
a dynamic data structure supporting updates and queries on $\DDD_w$ (this structure will include $\LEX$, see below). 
We represent $\DDD_w$ using the same approach of Alanko et al.~\cite{alanko2020regular}. We sketch the overall idea next and give all implementation details at the end of this section.

The representation of $\DDD_w$ leverages on the following classic representation of Wheeler automata \cite{GAGIE201767}: 
any WDFA $\DDD_w$ with $n_w$ states  and $m_w$ transitions can be reconstructed from the four sequences, 
sorted in Wheeler order, of the nodes' names $\LEX$, 
incoming labels $\lambda' = \lambda(\LEX[1]), \dots, \lambda(\LEX[n_w])$ (i.e. $\lambda'[i]$ is the incoming label of $\LEX[i]$), 
in-degrees \IN\ (i.e. \IN$[i]$ is the in-degree of $LEX[i]$), and  outgoing labels \OUT\ (i.e. \OUT$[i]$ is the string formed by all distinct characters labeling the out-going transitions of $LEX[i]$). From such sequences, it is possible to reconstruct the original WDFA by exploiting 
the following fact:
\begin{lemma}[\cite{GAGIE201767}]\label{lem:bipartite}
	Consider the following total orderings of the transitions $\delta_w$ of $\DDD_w$:
	\begin{enumerate}
		\item Sort the transitions $(u,a,v)$ by the Wheeler order of their destinations (i.e. by $\LEX^{-1}[v]$), breaking ties by the Wheeler order of their sources (i.e. $\LEX^{-1}[u]$).
		\item Sort the transitions $(u,a,v)$ by the Wheeler order of their sources (i.e. $\LEX^{-1}[u]$), breaking ties by their label $a$.
	\end{enumerate} 
	Then, for any $a\in \Sigma$, the relative order of transitions labeled $a$ in the orderings is the same.
\end{lemma}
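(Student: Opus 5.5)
We prove the statement directly from Wheeler axioms W1 and W2, comparing the $a$-labeled transitions in the two orderings. Fix a letter $a\in\Sigma$ and let $t_1=(u_1,a,v_1)$ and $t_2=(u_2,a,v_2)$ be two distinct $a$-labeled transitions of $\delta_w$. We must show that $t_1$ precedes $t_2$ in ordering 1 if and only if it precedes $t_2$ in ordering 2. Since both orderings are total orders on $\delta_w$, it is enough to prove one implication: if $t_1$ precedes $t_2$ in ordering 2, then it also precedes $t_2$ in ordering 1. The converse then follows by swapping the roles of $t_1$ and $t_2$.

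Because $\DDD_w$ is deterministic, two distinct transitions with the same label $a$ must have distinct sources, so $u_1\neq u_2$. In ordering 2 the label is only a tie-breaker, and here the sources already differ. Hence $t_1$ precedes $t_2$ in ordering 2 exactly when $u_1<u_2$ in Wheeler order, i.e.\ $\LEX^{-1}[u_1]<\LEX^{-1}[u_2]$.

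Now assume $u_1<u_2$. Axiom W2, applied with $a=b$, gives $v_1\le v_2$. We split on this inequality.
\begin{itemize}
\item If $v_1<v_2$, then $t_1$ precedes $t_2$ in ordering 1 by the primary key, the Wheeler rank of the destination.
\item If $v_1=v_2$, ordering 1 breaks the tie by the source. Since $u_1<u_2$, again $t_1$ precedes $t_2$.
\end{itemize}
In both cases the relative order of $t_1$ and $t_2$ agrees in the two orderings, which proves the claim.

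The only delicate point is that W2 gives only the weak inequality $v_1\le v_2$. Ordering 1 must break ties by source rank to handle the case of equal destinations, and the argument depends on that tie-breaking rule. Axiom W1 is not needed, since all transitions considered carry the same label. W1 is what places the $a$-block of ordering 1 contiguously relative to other letters, which matters for the reconstruction of $\DDD_w$ but not for this statement.
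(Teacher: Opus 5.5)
Your proof is correct. The paper does not prove this lemma itself but imports it from Gagie et al.~\cite{GAGIE201767}, and your argument is the standard one: determinism forces distinct sources, W2 gives $v_1\le v_2$, and the source tie-break in ordering 1 settles the case $v_1=v_2$.

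One small wrinkle: totality of ordering 1 on all of $\delta_w$ does need W1, since it excludes $(u,a,v)$ and $(u,b,v)$ with $a\neq b$. Your converse step, however, only uses totality on the $a$-labeled transitions, and determinism alone provides that, so your remark that W1 is not needed stands.
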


Suppose we now want to compute $\delta_w(u,a)$ on such a representation $(\LEX, \lambda', \IN, \OUT)$ (an operation which suffices to reconstruct the WDFA). 
First, we locate the position $i =\LEX^{-1}[u]$ of $u$ in Wheeler order. 
If $a \notin \OUT[i]$ then $\delta_w(u,a) = \bot$ and we are done.
Otherwise, we count the number $k$ of nodes in $\LEX[1,\dots, i]$ having an out-going edge labeled $a$. 
In other words, since $\DDD_w$ is deterministic, $(u,v,a)$ is the $k$-th transition labeled with $a$ in the total order (2) of Lemma \ref{lem:bipartite}. 
At this point, Lemma \ref{lem:bipartite} tells us that $(u,v,a)$ is the $k$-th transition labeled with $a$ in the total order (1) as well. We can therefore identify $v$ easily using $\lambda'$, \IN, and $\LEX$. 

Below we show how to dynamically maintain $(\LEX, \lambda', \IN, \OUT)$ so that a wide range of updates and queries on those sequences can be performed in $O(\log m_w)$ time each. At this point, it is not hard to show that each of the updates and queries on $\DDD_w$ performed by  Algorithm~\ref{algorithm: min wheeler} can be implemented in $O(\log m_w)$ time by reducing them to updates and queries on  $(\LEX, \lambda', \IN, \OUT)$ (see below).\ The claimed complexity of Algorithm~\ref{algorithm: min wheeler} follows immediately. Computing the spanning tree in Line \ref{line:SortedSpanningTree}, as well as performing the simple operations in Lines \ref{line:set Fw}-\ref{line: init congr MN} takes linear $O(m) \subseteq O(m_w)$ time. At this point note that, in each iteration of the main \texttt{while} loop, at least one new transition is added to $\DDD_w$ (Line \ref{line:insert transition}), while transitions are never deleted from $\DDD_w$. Each operation in the \texttt{while} loop takes $O(\log m_w)$ time (including the \texttt{foreach} operation at Line \ref{line:foreach2} --- see below for details, all those renamings of transitions are performed implicitly with $O(1)$ updates to $\IN$ and $\lambda'$), except the calls to \texttt{copy} in Lines \ref{line:CopyState1} and \ref{line:CopyState2}. Each call to \texttt{copy} may insert several new transitions in $\Delta$ and new states and transitions in $\DDD_w$. The former (new transitions in $\Delta$) will be processed in later iterations of the \texttt{while} loop and will be charged to the creation of a new transition of $\DDD_w$ each; the latter (new states and transitions in $\DDD_w$) amortize globally to $O(m_w\log m_w)$ time since states and transitions are never removed from $\DDD_w$.

\paragraph*{Implementing Dynamic Data Structures.}

We describe how all queries and updates on $\DDD_w$ performed by Algorithm \ref{algorithm: min wheeler} are implemented via queries and updates on the representation $(\LEX, \lambda', \IN, \OUT)$ of $\DDD_w$.

\subparagraph*{Data structures.}

$\LEX$ and $\lambda'$ are implemented with the dynamic string data structure 
of Nekrich and Navarro~\cite{NavarroN14},
representing any sequence $S$ over an integer alphabet in 
$O(|S|)$ words of space and supporting the following operations in $O(\log|S|)$ time:
\begin{itemize}
	\item Access any element: $S[i]$;
	\item Replace any character: given symbol (integer) $c$ and position $i$, set $S[i] \coloneq c$;
	\item $S.\mathtt{select}_c(i)$: position of the $i$-th symbol equal to $c$ in $S$;
	\item Insert a new symbol in an arbitrary position of $S$.
\end{itemize}
Observe that, since $\LEX$ is always a permutation, operation $\LEX^{-1}[q]$ is then solved simply as $\LEX.\mathtt{select}_q(1)$.
$\IN$ is represented using the dynamic searchable partial sum data structure of \cite{Prezza17}, using $O(|\IN|)$ words of space and supporting the following operations in $O(\log|\IN|)$: 
\begin{itemize}
	\item Partial sum: compute $\sum_{i=1}^j \IN[i]$ for any given $j \in \{1, \dots, |\IN|\}$;
	\item $\IN.\texttt{search}(k)$: given integer $k \ge 0$, return the minimum position $j$ such that $\sum_{i=1}^j \IN[i] \ge k$ (if any; otherwise, return $|\IN|+1$);
	\item Insert a new integer in an arbitrary position of $\IN$;
	\item Update: given any integers $i \in \{1, \dots, |\IN|\}$ and $\Delta$, update $\IN[i] \coloneq \IN[i] + \Delta$. 
\end{itemize}
Finally, $\OUT$ is a sequence of sequences. Let $k=|\OUT|$. We concatenate all those sequences in a dynamic string $\OUT' = \OUT[1] \OUT[2] \cdots \OUT[k]$ represented using the data structure of Nekrich and Navarro~\cite{NavarroN14}. We also keep a dynamic bitvector $B$ (again using the data structure of Nekrich and Navarro~\cite{NavarroN14}) storing the length of those strings in unary. Letting $t_i = |\OUT[i]|$, the bitvector is $B = 10^{t_1}10^{t_2}, \dots, 10^{t_k}$. At this point, 
it is not hard to see that
we can solve the following queries and updates in $O(\log|\OUT|)$ time on $\OUT$ by reducing them to queries and updates on $\OUT'$ and $B$ (we omit the details of such a classic reduction; see, e.g., \cite{alanko2020regular}): 
\begin{itemize}
	\item $\OUT.rank_c(i)$: given a character $c$ and a position $i \in [k]$, count the number of symbols equal to $c$ in the strings $\OUT[1, \dots, i]$.
	\item $\OUT.select_c(i)$: given a character $c$ and an integer $i\ge 1$, return the position $j$ such that $\LEX[j]$ is the $i$-th state in Wheeler order having an out-going transition labeled $c$. Return $0$ if no such state exists.
	\item  Given a character $c$ and a position $i \in [k]$, append $c$ to $\OUT[i]$.
\end{itemize}

Since all manipulated sequences have length $O(m_w)$, all basic operations discussed above, as well as the more complex operations described below, cost $O(\log m_w)$ time.

\subparagraph*{Creating New States: $\mathtt{insert}(q,j)$.} This update easily translates to an insert operation in all four components $\LEX, \lambda', \IN, \OUT$: we insert $q$ in position $j$ of $\LEX$;
we insert the special symbol $\#$ in position $j$ of $\lambda'$, signaling that $q$ is created (temporarily) without incoming edges;
we insert $0$ in position $j$ of $\IN$, signaling that the in-degree of $q$ is (temporarily) 0; we insert the empty string $\epsilon$ in position $j$ of $\OUT$, since $q$ (temporarily) has no out-going edges.

\subparagraph*{Inserting Transitions Into $\delta_w$.}

Adding individual transitions $(u,a,v)$ to $\delta_w$  (Lines \ref{line:modify T} and \ref{line:insert transition}) translates to 
the following operations. We compute $i_u = \LEX^{-1}[u]$ and  $i_v = \LEX^{-1}[v]$; append $a$ to $\OUT[i_u]$; increment $\IN[i_v] \coloneq \IN[i_v]+1$; and if $\lambda'[i_v] = \#$, we substitute $\lambda'[i_v] \coloneq a$.
	
\subparagraph*{Evaluating Transition Function.}
	
Evaluating the destination $v$ of $\delta_w(u,a)$ for any given state $u$ and character $a$ requires the following operations. We compute $i_u = \LEX^{-1}[u]$; 
if $\OUT[i_u]$ does not contain $a$ (we can discover this with two simple rank queries on $\OUT$), then we return $v=\bot$; otherwise, 
we compute the number $t$ of states before $u$ included (in Wheeler order) having an out-going transition labeled with $a$: $t = \OUT.rank_a(i_u)$; we compute the cumulative in-degrees $z$ of states with an incoming label strictly smaller than $a$: $z = \sum_{i=1}^{\lambda'.select_a(1)-1} \IN[i]$; we obtain (by Lemma \ref{lem:bipartite}) the position $i_v$ in Wheeler order of $v$: $i_v = \IN.search(z+t)$; we return $v = \LEX[i_v]$.

\subparagraph*{$\mathtt{pred}(u,a)$ and $\mathtt{succ}(u,a)$}
	
We discuss only $\mathtt{pred}(u,a)$, as $\mathtt{succ}(u,a)$ is symmetric. To solve $\mathtt{pred}(u,a)$ we proceed as follows. We retrieve $i_u = \LEX^{-1}[u]$; we compute the number $t$ of states before $u$ excluded  (in Wheeler order) having an out-going transition labeled with $a$: $t = \OUT.rank_a(i_u-1)$ (if $i_u=1$ or $t=0$, then we return $\bot$); finally, we return the $t$-th state in Wheeler order having an out-going transition labeled with $a$, that is, $\LEX[\OUT.select_a(t)]$.

\subparagraph*{Batch of Transitions Renamings (Line \ref{line:foreach2} of Algorithm).} 
	
The goal of this operation is to change the destination of all incoming transitions $(x,a,p)\in \delta_w$ of $p$ such that $u<x$ to $s$, i.e. renaming those transitions to $(x,a,s)$. 	
Even though this operation renames the destination of several transitions of $\DDD_w$, we can perform the whole batch of transitions with $O(1)$ updates to $\IN$ and $\lambda'$. This is possible thanks to (i) our WDFA representation, (ii)  to 
the fact that those transitions originally end in the same state $p$, and (iii) to the fact that after the renaming they end up in two adjacent (in Wheeler order) states $p<s$.

Recall that, at this point of execution, the in-degree of $s$ is $0$. 
Let $i_p = \LEX^{-1}[p]$ and $i_s = i_p+1$. In particular, $\LEX[i_s]=s$. 
Let $k = \IN[i_p]$. Let $d$ be the number of such transitions  $(x,a,p)\in \delta_w$ whose destination has to be renamed to $s$; the value $d$ can be found as follows.
Let $y$ be the largest state in Wheeler order such that $(y,a,p) \in \delta_w$, and let $i_y = \LEX^{-1}[y]$.  
Let moreover $i_u = \LEX^{-1}[u]$.
Then, $d = \OUT.rank_a(i_y) - \OUT.rank_a(i_u)$. We are left to find $i_y$. Letting $k$ being the cumulative number of incoming transitions entering in states less than or equal to $p$ in Wheeler order, then $i_y = \OUT.select_a(k)$. 
Finally, $k = \sum_{i=1}^{i_p} \IN[i] - \sum_{i=1}^{f_a-1} \IN[i]$, where $\LEX[f_a]$ is the first node in Wheeler order having an incoming transition labeled with $a$. Integer $f_a$ can be maintained explicitly for every $a$ using, for example, a self-balancing tree (even though this is not necessary as one can compute it with operations on $\OUT$ and $\IN$).

We are only left to perform the actual update, which requires just three operations: (1)  
$\lambda'[i_s] \coloneq a$, (2)
$\IN[i_p] \coloneq \IN[i_p] - d$, and (3) $\IN[i_s] \coloneq \IN[i_s] + d  = d$.

\subparagraph*{Operation at Line \ref{line:find i} of Algorithm.}
	
This operation boils down to finding the minimum alphabet's character $x$ being larger than or equal to $a$ and labeling some transition in $\delta_w$ (a simple self-balancing tree storing those characters can be used here). 
Then, $\lambda'.select_x(1)-1$ is the position in $\LEX$ of the largest node in Wheeler order having an incoming transition labeled with a character strictly smaller than $a$.

\section{Experiments}

We implemented our minimum WDFA algorithm, \minWheeler (Algorithm 1), in C++, the source code is available at \url{https://github.com/regindex/Minimum-WDFA-Constructor}. To evaluate its performance, we compared it with GCSA \cite{GCSA}, which, to the best of our knowledge, is the only other available tool capable of computing a Wheeler DFA (termed prefix-sorted automaton in the original work) for a Wheeler language, crucially without providing any minimality guarantee on the size of the output automata. GCSA provides a comprehensive pipeline that starts with a genomic variation file (.VCF) and a reference genome, and produces a final index that supports exact path-matching queries on the resulting pangenome graph. For a fair comparison with \minWheeler, we only ran the part of the GCSA pipeline that computes a WDFA from the input DFA, storing the pangenome to be indexed. For our test data, we employed the 23 automata used in the original GCSA benchmarks \cite{HelsinkiData} (Table 2 in \cite{GCSA}), encoding the Finnish subset of frequent mutations from the dbSNP database relative to the human reference chromosomes. Since GCSA operates on the reverse-deterministic automaton, we feed \minWheeler the reversed input, so that the two tools compute Wheeler DFAs on the same orientation and their output sizes can be directly compared.
All experiments were run on a server equipped with an Intel Xeon W-2245 CPU (3.90 GHz, 8 cores) and 128 GB of RAM, running Ubuntu 18.04 LTS 64-bit.

\begin{table*}[ht!]
\caption{Performance and other statistics. Comparison between GCSA and \minWheeler DFA construction. Time is expressed in minutes, Space in GB. \% increase is calculated as the relative increase of nodes in the Wheeler DFA compared to the initial graph nodes.}
\label{tab:chrom_stats}
\scalebox{0.85}{
\begin{minipage}{\textwidth}
\begin{tabular}{lrrrrrrrrr}
\toprule
 \multicolumn{3}{c}{ Input DFA }
 & \multicolumn{2}{c}{ GCSA } 
 & \multicolumn{2}{c}{ \minWheeler } 
 & \multicolumn{3}{c}{ Minimum  WDFA Size } \\
\cmidrule(lr){1-3} \cmidrule(lr){4-5} \cmidrule(lr){6-7} \cmidrule(lr){8-10}
Chrom.\ & Nodes & Edges
& Time & Space 
& Time & Space 
& Nodes & Edges & \% increase \\
\midrule
Chr 1  &  250.2M & 251.2M & 26.1 min & 17.1 GB & 24.4 min & 66.7 GB & 273.6M & 275.1M & 9.43\% \\
Chr 2  &  244.3M & 245.3M & 23.8 min & 17.1 GB & 31.3 min & 63.1 GB & 274.5M & 276.4M & 12.52\% \\
Chr 3  &  198.9M & 199.8M & -- & -- & -- & -- & $>$2200M & $>$2200M & $>$1100\% \\
Chr 4  &  192.1M & 193.0M & 19.8 min & 21.6 GB & 31.1 min & 49.7 GB & 226.2M & 228.4M & 18.05\% \\
Chr 5  & 181.7M & 182.5M & 18.2 min & 13.2 GB & 35.0 min & 47.0 GB & 216.2M & 217.9M & 19.19\% \\
Chr 6  &  172.0M & 172.8M & -- & -- & -- & -- & $>$2200M & $>$2200M & $>$1200\% \\
Chr 7  &  159.9M & 160.6M & 15.6 min & 11.6 GB & 31.7 min & 41.3 GB & 188.9M & 190.7M & 18.44\% \\
Chr 8  &  147.0M & 147.7M & -- & -- & -- & -- & $>$2200M & $>$2200M & $>$1400\% \\
Chr 9  &  141.8M & 142.3M & 14.0 min & 9.7 GB  & 11.4 min & 37.8 GB & 153.2M & 153.9M & 8.10\% \\
Chr 10 &  136.2M & 136.8M & 12.8 min & 9.5 GB  & 13.9 min & 35.2 GB & 151.4M & 152.3M & 11.25\% \\
Chr 11 &  135.6M & 136.3M & -- & -- & -- & -- & $>$2200M & $>$2200M & $>$1600\% \\
Chr 12 &  134.5M & 135.1M & 14.1 min & 10.8 GB  & 33.0 min & 34.8 GB & 177.9M & 180.9M & 33.09\% \\
Chr 13 &  115.6M & 116.1M & 11.2 min  & 8.1 GB  & 10.2 min  & 31.1 GB & 127.0M & 127.7M & 9.93\% \\
Chr 14 &  107.7M & 108.8M & 10.1 min & 7.2 GB  & 7.3 min  & 28.8 GB & 113.5M & 113.8M & 4.99\% \\
Chr 15 &  103.0M & 103.3M & 9.8 min  & 7.1 GB  & 9.4 min  & 27.5 GB & 112.9M & 113.5M & 9.74\% \\
Chr 16 &  90.7M & 91.1M  & -- & -- & -- & -- & $>$2200M & $>$2200M & $>$2000\% \\
Chr 17 &  81.5M  & 81.9M & 58.9 min & 117.2 GB & 609.8 min & 48.1 GB & 1205.3M & 1244.8M & 1399.5\% \\
Chr 18 &  78.4M  & 78.8M & -- & -- & 241.5 min & 29.0 GB & 654.7M & 700.8M & 762.28\% \\
Chr 19 &  59.4M  & 59.7M & 5.4 min  & 4.5 GB  & 8.5 min  & 15.4 GB & 70.0M  & 70.6M & 18.05\% \\
Chr 20 &  63.3M  & 63.6M & 5.6 min  & 4.6 GB  & 6.1 min  & 16.4 GB & 69.9M  & 70.3M & 10.48\% \\
Chr 21 &  48.3M  & 48.5M & 4.1 min  & 3.4 GB  & 3.1 min  & 12.5 GB & 51.6M  & 51.8M & 6.82\% \\
Chr 22 &  51.5M  & 51.7M & 4.3 min  & 3.6 GB  & 3.4 min  & 13.9 GB & 55.0M  & 55.2M & 6.78\% \\
Chr X  &  155.7M & 156.1M & 14.1 min & 10.6 GB & 14.3 min & 40.2 GB & 168.6M & 169.3M & 8.37\% \\
\bottomrule
\end{tabular}
\end{minipage}
}
\end{table*}

Table \ref{tab:chrom_stats} shows a summary of our results. From left to right, we report the reference chromosomes, followed by the numbers of nodes and edges for the corresponding input DFAs. We then show the wall-clock time and peak memory (RSS) achieved by the two competing software. Finally, we report the size of the minimum Wheeler DFA (WDFA) computed by \minWheeler and the resulting percentage increase in size. Dashes indicate cases where the computation did not finish, due to either exceeding the internal memory limit or reaching a 24-hour timeout. In those cases, we report the number of nodes and edges reached by the output WDFA at termination. On every chromosome where GCSA terminates, the WDFA in output has the same size as the minimum WDFA computed by \minWheeler, as the near-linear structure of pangenome graphs makes the number of state insertions needed for prefix-sortability essentially independent of the two algorithms.
We observe that both software do not terminate when the resulting WDFAs are very large. This behavior is expected, as the algorithms are output-sensitive; thus, their running time scales with the size of the final automata. A clear example is Chromosome 16, where the WDFA under construction reached 2.2 billion nodes within the first 24 hours, resulting in a size increase of over 2000\% relative to the input DFA. This indicates that, even if all input DFAs are acyclic and thus recognize a Wheeler language, the size of the minimum  WDFA recognizing such a language is not necessarily small.

In terms of performance, \minWheeler generally requires more resources than GCSA; specifically, on 16 input instances, our software uses between 2 and 4 times more memory. However, thanks to the incremental construction, which processes one transition at a time, \minWheeler does not suffer from the memory peaks observed in the competitor. In all cases where our computation did not finish, it terminated due to reaching the timeout, whereas GCSA always failed due to exceeding the internal memory size. This is evident for chromosomes 17 and 18: for chromosome 17, our software reaches a significantly lower memory peak, while for chromosome 18, it successfully completes construction, whereas GCSA fails.
Regarding running time, our software has proved competitive with GCSA. In particular, GCSA was at least two times faster in only 3 cases; while \minWheeler showed better or equivalent running time in 10 other instances, resulting in an average throughput of more than $10^5$ transitions/second. This is especially remarkable considering that \minWheeler relies on dynamic data structures to maintain the output WDFAs during construction.

In conclusion, we presented a proof-of-concept implementation of an algorithm that computes a minimum WDFA for a Wheeler language. Our experimental results show that our software is competitive with GCSA while providing the additional guarantee of minimum  output size. Additionally, we proved more robust in corner cases, successfully completing the WDFA construction of one additional chromosome. This confirms that our algorithm is effective at computing minimum WDFAs for sparse graphs, unlocking the construction of provably optimal-size pattern-matching data structures for pangenomes. On the other hand, our implementation relies on dynamic data structures, which can limit practical performance on large automata. Future work will focus on developing optimized data structures to further improve performance for large-scale applications.

\bibliography{WheelerDFA}

\newpage
\appendix

\section{Proof of Lemma~\ref{lemma: invariants refactored}}
\label{sec: invariant_proofs}
\ArxivOrCr{In order to prove Lemma~\ref{lemma: invariants refactored}, we first need three simple statements.}
{In order to prove Lemma~\ref{lemma: invariants refactored}, we need the following three simple statements whose proof can be found in the full version of the article~\cite{arxiv}.}
\begin{restatable}{lemma}{movetransition}\label{lem: modify}
    Let $\DDD = (Q, \Sigma, \delta, q_0, F)$ be a DFA. 
    \begin{enumerate}[(1)]
        \item \label{item: move transition} If we modify $\DDD$ by replacing a transition $\delta(u, a) = v$ with $\delta(u, a) = v'$, for any $v'\equiv v$, then the equivalence relation $\equiv$ and the language $\LLL(\DDD)$ remain unchanged. 
        \item \label{item: copy out-transitions} Let $u\in Q$. If we modify $\DDD$ by adding a new state $u'$ to $Q$, adding a new transition $\delta(u',a)=v$ for every transition $\delta(u,a)=v$, and making $u'$ final if and only if $u$ is final, then the language $\LLL(\DDD)$ does not change, and $u \equiv u'$ in the modified DFA. Apart from the addition of $u'$, the relation $\equiv$ does not change.
        \item \label{item: move in transition} Assume $\DDD$ to be co-accessible and let $q, q'\in Q$ be  distinct such that $q\equiv q'$. If we modify $\DDD$ by replacing a transition $(p, a, q)$ with a transition $(p, a, q')$, then $\DDD$ stays co-accessible.
    \end{enumerate}
\end{restatable}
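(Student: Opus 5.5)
Each of the three items is a direct check on strings. The central object is the right language of a state, $R(q) := \{\gamma \in \Sigma^* : \delta(q,\gamma) \in F\}$. Two states satisfy $q \equiv q'$ exactly when $R(q) = R(q')$. The language itself is $\LLL(\DDD) = R(q_0)$.

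\emph{Item (1).} Let $\DDD'$ denote the automaton obtained by redirecting $\delta(u,a)$ from $v$ to $v'$, and let $R'$ be its right-language function. I will prove $R'(q) = R(q)$ for every state $q$ by showing $\gamma \in R'(q) \iff \gamma \in R(q)$, using induction on $|\gamma|$ simultaneously for all $q$. For $\gamma = \epsilon$ there is nothing to show, since the final states are unchanged. For $\gamma = c\gamma'$ there are two cases.
\begin{itemize}
\item If $(q,c) \neq (u,a)$, the first step is the same in both automata, and the induction hypothesis applied to $\delta(q,c)$ and $\gamma'$ gives the claim.
\item If $(q,c) = (u,a)$, then $\gamma \in R'(u)$ iff $\gamma' \in R'(v')$. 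By induction this holds iff $\gamma' \in R(v')$. Since $v' \equiv v$, we have $R(v') = R(v)$, so this holds iff $\gamma' \in R(v)$, which is exactly $\gamma \in R(u)$.
\end{itemize}
Consequently $\equiv$ is unchanged, and so is the language, because it equals $R(q_0)$. If $\delta(q,c)$ is undefined, it is undefined in both automata and the step is trivial.

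\emph{Item (2).} No transition points into the new state $u'$, so every old state keeps its set of reachable paths. Hence $R(q)$ is unchanged for every old $q$, and in particular $\LLL(\DDD)$ is unchanged. For the new state, $\epsilon \in R(u')$ iff $u' \in F$ iff $u \in F$. For $\gamma = c\gamma'$, the transition $\delta(u',c)$ is defined iff $\delta(u,c)$ is, and then both equal the same old state $x$. Therefore $R(u') = R(u)$, which gives $u \equiv u'$. Relations between old states are untouched.

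\emph{Item (3).} By item (1), redirecting $(p,a,q)$ to $(p,a,q')$ leaves $R$ unchanged for every state. A DFA is co-accessible iff $R(x) \neq \emptyset$ for every state $x$, so co-accessibility is preserved.

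The only point needing care is the induction in item (1). Its hypothesis has to range over all states at once, because the redirected edge may lie on a cycle through $u$. Doing the induction on the length of $\gamma$, rather than on the structure of the automaton, handles this.
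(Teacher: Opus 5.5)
Your proof is correct. Items (1) and (2) follow the paper's argument essentially verbatim. For (1), the paper also proves by induction on $|\alpha|$, uniformly over all states, that $\delta(q,\alpha)\in F$ iff $\delta'(q,\alpha)\in F$, which is your $R'(q)=R(q)$; it splits on whether the first step takes the redirected edge, as you do. For (2), it uses the same three observations: old states are unaffected, $u'$ agrees with $u$ on non-empty strings, and $u'$ has the same final status as $u$.

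You take a different route on item (3). The paper gives a separate strong induction on the length $k$ of a path to a final state:
\begin{enumerate}
\item it cuts the path at the first occurrence of the edge $(p,a,q)$;
\item it uses $q\equiv q'$ to obtain an equally labeled accepting path from $q'$;
\item it applies the induction hypothesis to that shorter path and splices in $(p,a,q')$.
\end{enumerate}
You instead note three things. The operation in (3) is exactly the operation in (1). Your proof of (1) yields the pointwise statement $R'(x)=R(x)$ for every state, which is stronger than what (1) states: preservation of $\equiv$ and $\LLL(\DDD)$ alone would not suffice here. And co-accessibility is equivalent to $R(x)\neq\emptyset$ for every $x$.

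Your version is shorter and avoids the path surgery entirely. It also shows that the distinctness hypothesis on $q,q'$ is unnecessary. The paper's argument is self-contained and exhibits the new accepting path explicitly, but it relies on the same fact $q\equiv q'$ and buys nothing beyond that.
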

\ArxivOrCr{
\begin{proof}
    \begin{enumerate}[(1)]
        \item  Let us denote with $\delta'$ the transition function obtained from $\delta$ by replacing the tuple $(u,a,v)$ with $(u, a, v')$ for some $v'\equiv v$. Recall that $p\equiv p'$ holds for $p, p'\in Q$ if and only if, for all $\alpha \in \Sigma^*$, we have $\delta(p, \alpha) \in F$ if and only if $\delta(p', \alpha) \in F$. 
        Observe also that $\LLL(\DDD) = \{\alpha\in\Sigma^*: \delta(q_0, \alpha) \in F\}$. Hence, it is enough to show that for any $q\in Q$ and $\alpha \in \Sigma^*$, we have $p=\delta(q, \alpha)\in F$ if and only if $p'=\delta'(q, \alpha)\in F$.
        Let $q\in Q$ be arbitrary. We show the statement, for any $\alpha\in \Sigma^*$, it holds that $p=\delta(q, \alpha)\in F$ if and only if $p'=\delta'(q, \alpha)\in F$ by induction over the length of $\alpha\in \Sigma^*$. 
        
        First assume that $|\alpha|=0$ and thus $\alpha=\epsilon$. Then $p=p'=q$ and hence the statement holds trivially. Now assume that $|\alpha|>1$ and write $\alpha = a \beta$. By induction, we have that, for any $q'\in Q$, $\delta(q', \beta)\in F$ if and only if $\delta'(q', \beta)\in F$. Let $r=\delta(q, a)$ and $r'=\delta'(q, a)$. We distinguish two cases. (1)~If $r=r'$, we have $p=\delta(\delta(q, a), \beta)=\delta(r, \beta)$ and $p'=\delta'(\delta(q, a), \beta)=\delta'(r, \beta)$ and hence the statements follows using the induction hypothesis. (2)~If $r\neq r'$, it holds that $r=v$ and $r'=v'$, i.e., the transition $(q,a,r)$ in $\delta$ was replaced by the transition $(q, a, r')$ in order to obtain $\delta'$. Then $r\equiv r'$, which means that, for all $\gamma\in \Sigma^*$, we have $\delta(r, \gamma)\in F$ if and only if $\delta(r', \gamma)\in F$. Thus, we have $p=\delta(r, \beta)\in F$ if and only if $\delta(r', \beta)\in F$. Using the induction hypothesis the latter happens if and only if $p'=\delta'(r', \beta)\in F$. This concludes the proof.
        \item Let $\DDD'=(Q', \Sigma, \delta', q_0, F')$ with $Q'=Q\cup\{u'\}$ and $\delta' := \delta \cup \{(u', a, v): (u, a, v)\in \delta\}$ be the DFA after the modifications and let $\equiv'\;:=\;\equiv \cup \{(u', v), (v, u'): (u, v)\in \equiv\}$. Recall that $\LLL(\DDD) = \{\alpha\in\Sigma^*: \delta(q_0, \alpha) \in F\}$ and that $p\equiv p'$ holds for two states $p, p'\in Q$ if and only if, for all $\alpha \in \Sigma^*$, we have $\delta(p, \alpha) \in F$ if and only if $\delta(p', \alpha) \in F$.
    
        Observe that, (1)~for all $q\in Q$ and $\alpha \in \Sigma^*$, we have $\delta(q, \alpha)= \delta'(q, \alpha)$, (2)~we have $\delta'(u', \alpha)=\delta(u, \alpha)$ for $\alpha\neq \epsilon$, and (3)~$\delta'(u', \epsilon)\in F$ if and only if $\delta(u, \epsilon)\in F$. 
        Now, (1)~implies that $\LLL(\DDD)=\LLL(\DDD')$. It remains to verify that $\equiv'$ is the Myhill-Nerode equivalence relation $\equiv_{\DDD'}$ on $\DDD'$. Let $q, q'\in Q'$. If $\{q, q'\}\subseteq Q$, (1)~implies that $q \equiv_{\DDD'} q'$ if and only if $q \equiv q'$, which by the definition of $\equiv'$ holds if and only if $q \equiv' q'$. If $\{q, q'\}\not\subseteq Q$, assume w.l.o.g.\ that $q'=u'$. We have $q\equiv' u'$ if and only if $q\equiv u$ using the definition of $\equiv'$. Finally, (1), (2) and (3) imply that $q\equiv u$ if and only if $q\equiv_{\DDD'}u'$.
        \item We prove the following statement by strong induction over $k>1$. States that could reach a final state through a path $P$ of length $k$ can still reach a final state through a path of length $k$ after the modification of $\DDD$. 

        For the induction base, assume $v$ to be a state that could reach a final state through a path $P$ of length $k=1$. Then $v\in F$ and there is nothing to show.
        Now assume $v$ to be a state that could reach a final state through a path $P$ of length $k > 1$ before the modification and assume the statement to be true for all $k'<k$. If the path $P$ does not go through the transition $(p, q, q)$, there is nothing to show as $P$ still exists after the modification. Hence, assume that $P = \pi, p, q, \rho$ for some subpaths $\pi$ and $\rho$ and assume $\pi$ to be of shortest length with that property. This means that $q, \rho$ is a path to a final state. Let $\alpha$ be its label. As $q\equiv q'$, it follows that there exists a path $\rho'$ labeled $\alpha$ leading to a final state. The path $q', \rho'$ is of length $k'=|q',\rho'|=|q,\rho|<k$ and thus by the induction hypothesis there exists a path, say $\rho''$, of length $k'$ such that $q',\rho''$ leads to a final state also after the modification. It now follows that $\pi, p, q', \rho''$ is a path from $v$ to a final state that exists after the modification. \qedhere 
    \end{enumerate}
\end{proof}
}{}

\ArxivOrCr{
We are now ready to prove all the invariants in Lemma~\ref{lemma: invariants refactored}. We restate the lemma for better readability.
\invariantsrefactored*
\medskip

We start by arguing that the invariants all hold initially, i.e., before the first execution of the while loop. Thereafter we argue that the invariants are maintained in each of the three cases that the algorithm considers (depending on the evaluation of the two if conditions within in the while loop).
}{\medskip

We are now ready to prove Lemma~\ref{lemma: invariants refactored}. We start by arguing that the invariants hold initially and then argue that they are maintained during each iteration.
}

\ArxivOrCr{\paragraph*{Initialization.}}{\subparagraph*{Initialization.}}
\ArxivOrCr{
    \begin{description}
        \item[Invariants~\ref{Loop Invariant: Deterministic}, \ref{Loop Invariant: wheeler order}, and \ref{Loop Invariant: language}] 
        These invariants hold trivially since $\DDD_w^\Delta = \DDD$ and $\DDD_w$ is by construction Wheeler with Wheeler order encoded by $\LEX$.
        
        \item[Invariants~\ref{Loop Invariant: Connectivity} and \ref{Loop Invariant: co-accessible}]
        By minimality of $\DDD$, every state in 
        $\DDD$ is reachable from $q_0$ and can reach a final state. As initially $\DDD_w^\Delta=\DDD$ this shows co-accessibility of $\DDD_w^\Delta$. Moreover, $\DDD_w$ is a spanning out-tree of $\DDD$ rooted in $q_0$, which shows accessibility of $\DDD_w$.

        \item[Invariants \ref{Loop Invariant: MN-equiv} and \ref{Loop Invariant: min order}] As $\DDD = \DDD_w^\Delta$ is minimum, no two distinct states of $\DDD_w^\Delta$ are Myhill-Nerode equivalent in $\DDD_w^\Delta$. This immediately implies Invariant~\ref{Loop Invariant: min order} and furthermore implies that Line~\ref{line: init congr MN} sets $\cong$ to be equal to $\equiv$, thus also implying Invariant~\ref{Loop Invariant: MN-equiv}.
    \end{description}
}{
    Invariants~\ref{Loop Invariant: Deterministic}, \ref{Loop Invariant: wheeler order}, and \ref{Loop Invariant: language} hold trivially since $\DDD_w^\Delta = \DDD$ and $\DDD_w$ is by construction Wheeler with Wheeler order encoded by $\LEX$. For Invariants~\ref{Loop Invariant: Connectivity} and \ref{Loop Invariant: co-accessible}, note that by minimality of $\DDD$, every state in $\DDD$ is reachable from $q_0$ and can reach a final state. As initially $\DDD_w^\Delta=\DDD$ this shows co-accessibility of $\DDD_w^\Delta$. Moreover, $\DDD_w$ is a spanning out-tree of $\DDD$ rooted in $q_0$, hence $\DDD_w$ is accessible. For Invariants \ref{Loop Invariant: MN-equiv} and \ref{Loop Invariant: min order}, observe that $\DDD = \DDD_w^\Delta$ is minimum and no two distinct states of $\DDD_w^\Delta$ are Myhill-Nerode equivalent. This immediately implies Invariant~\ref{Loop Invariant: min order} and also implies that Line~\ref{line: init congr MN} sets $\cong$ to be equal to $\equiv$, thus also Invariant~\ref{Loop Invariant: MN-equiv} holds.
}

\ArxivOrCr{\paragraph*{Maintenance.}}{\subparagraph*{Maintenance.}}

Every iteration of the \texttt{while} loop of Algorithm \ref{algorithm: min wheeler} removes a transition $(u, a, v)$ from $\Delta$. As in the algorithm, we let $p \coloneqq \delta(\texttt{pred}(u,a),a)$ and $s \coloneqq \delta(\texttt{succ}(u,a),a)$. The algorithm then modifies $\DDD_w$ and $\DDD_w^\Delta$ in one of the following three ways, depending on the branches taken in the two \texttt{if} statements in the while loop.
We show that the invariants are maintained in each of the three cases. Note that when proving that an Invariant $k$ is maintained we only use Invariants $l$ such that $l<k$.
 
\subparagraph*{Case 1: $v \cong p$ or $v \cong s$.}  Hence, the \texttt{if} condition at line \ref{line:if1} holds.
The algorithm's actions: 
\begin{enumerate}[(i)]
    \item The transition $(u,a,v)$ is removed from $\Delta$. \label{item: case 1 remove transition}
    \item The transition $(u,a,v')$ is added to $\delta_w$, where $v' \in \{p,s\}$ is such that $v' \cong v$. \label{item: case 1 add transition}
\end{enumerate}

\begin{description}
    \item[Invariant \ref{Loop Invariant: Deterministic}.] The invariant is trivially maintained.
    \item[Invariant \ref{Loop Invariant: Connectivity}.]  
    Only Action~\eqref{item: case 1 add transition} modifies $\DDD_w$. As it adds a transition between existing states, the invariant is trivially maintained.

    \item[Invariant \ref{Loop Invariant: wheeler order}.] 

    By its definition, $v'$ has at least one incoming transition in $\DDD_w$, hence $v'\neq q_0$ and 
    Invariant~\ref{Loop Invariant: Connectivity} implies that $q_0$ is still the only state with in-degree equal to zero in $\DDD_w$.
    
    (W1) Action~\eqref{item: case 1 add transition} adds $(u,a,v')$ to $\delta_w$
    and, by definition of $v'$, $\lambda_{\DDD_w}(v') = a$ before the action. 
   Hence $\lambda_{\DDD_w}$ does not change.
    Since $\LEX$ is also unchanged, $q_0$ still precedes all states in $\LEX$ after the action and the Wheeler property W1 is maintained.

    (W2) The actions only touch transitions labeled with $a$, so we can ignore those labeled with other letters. Action~\eqref{item: case 1 add transition} adds $(u,a,v')$ to $\delta_w$, so we have to verify that for all $x\in Q_w$ such that $\delta_w(x,a) = x'$ is defined, $x<u$ implies $x' \le v'$ (the case ``$u<x$ implies $v' \le x'$'' is completely symmetric and we do not treat it). Let $x$ be such that $\delta_w(x,a) = x'$ is defined and $x<u$; if no such state exists, then we are done. We therefore assume that $x$ exists. Note that $x < u$ implies that $\texttt{pred}(u,a)$ exists, therefore $p \coloneqq \delta(\texttt{pred}(u,a),a)$ is defined as well. 
    If $s$ is defined, then $\texttt{pred}(u,a) < \texttt{succ}(u,a)$ and property W2 (before the action) imply that $p \le s$. 
    Two cases can happen. (i) If $x = \texttt{pred}(u,a)$, then $x' = p \le s$ (if $s$ is defined; otherwise, $x' = p$) and we are done since $v' \in \{s,p\}$ implies that $x' \le v'$.
    (ii) If $x \neq \texttt{pred}(u,a)$, then it must be $x < \texttt{pred}(u,a)$ by the definitions of $\texttt{pred}(u,a)$ and $x$. Property W2 (before the action) implies $x' = \delta(x,a) \le \delta(\texttt{pred}(u,a),a) = p$. Also in this case we are done: $v' \in \{s,p\}$ and (if $s$ exists) $p\le s$ implies $x' \le v'$.
    
    \item[Invariants \ref{Loop Invariant: MN-equiv} and \ref{Loop Invariant: language}.] By assumption, before applying the actions $\equiv$ and $\cong$ are the same equivalence relation. After applying the actions, by Lemma~\ref{lem: modify}~\eqref{item: move transition} the equivalence relation $\equiv$ does not change. Since the actions do not modify $\cong$ nor the language, we conclude that $\equiv$ and $\cong$ are still the same relation, hence Invariants \ref{Loop Invariant: MN-equiv} and \ref{Loop Invariant: language} are maintained. 
    
    \item[Invariant \ref{Loop Invariant: min order}.] 
    By Invariant \ref{Loop Invariant: wheeler order}, the automaton $\DDD_w$ is Wheeler and hence input-consistent after the actions.
    Since the actions only add one transition to $\delta_w$, we conclude that $\lambda_{\DDD_w}(v)$ and $\lambda_{\DDD_w}(v')$ do not change. 
    No other state's incoming transitions are modified, so $\lambda_{\DDD_w}$ does not change.
    By Invariant \ref{Loop Invariant: MN-equiv}, the relation $\equiv$ does not change through the actions either. 
    Additionally, the actions do not modify $\LEX$.
    Hence Invariant \ref{Loop Invariant: min order} is maintained. 
    
    \item[Invariant \ref{Loop Invariant: co-accessible}.] Actions~\eqref{item: case 1 remove transition} and~\eqref{item: case 1 add transition} replace the transition $(u, a, v)$ in $\DDD_w^\Delta$ with the transition $(u, a, v')$, where $v\cong v'$. By Invariant~\ref{Loop Invariant: MN-equiv}, it holds that $v\equiv v'$ and hence Lemma~\ref{lem: modify}~\eqref{item: move in transition} yields that $\DDD_w^\Delta$ is still co-accessible after this transformation.
\end{description}

\subparagraph*{Case 2: $p \not\cong v \not\cong s$ and $p = s \neq \bot$.} In other words, the \texttt{if} condition at line \ref{line:if1} does not succeed and the \texttt{if} condition at line \ref{line:if2} succeeds. 
Summary of the algorithm's actions:
\begin{enumerate}[(i)]
    \item The transition $(u,a,v)$ is removed from $\Delta$. \label{item: case 2 remove uav}
    \item The state $p=s$ is ``split'' into two distinct states $p\neq s$ with the same $\cong$-class and same ``final'' status. This state $s$ is put in $\LEX$ in the position immediately following $p$. For each transition $(p,c,x)\in \Delta$, a new transition $(s,c,x)$ is inserted into $\Delta$. For each transition $(p,c,x)\in \delta_w$, a new transition $(s,c,x)$ is inserted into $\delta_w$. \label{item: case 2 copy p}
    \item Every $(x,a,p)\in \delta_w$ with $u<x$ (in Wheeler order), is replaced (in $\delta_w$) with $(x,a,s)$.\label{item: case 2 move transitions}
    \item A new state $v'$ is inserted between $p$ and $s$ (so that $p<v'<s$ are adjacent in $\LEX$), then every outgoing transition $(v, c, x)\in \Delta \cup \delta_w$ of $v$ is copied as $(v', c, x)$ into $\Delta$, and the $\cong$-class of $v'$ and its ``final'' status are set to be equal to those of $v$. \label{item: case 2 copy v}
    \item The transition $(u,a,v')$ is inserted into $\delta_w$. \label{item: case 2 insert u a vp}
\end{enumerate}

\begin{description}
    \item[Invariant \ref{Loop Invariant: Deterministic}.] The new states $s$ and $v'$ get the same outgoing transitions as $p$ and $v$, respectively. No other state gets a new outgoing label in $\DDD_w^\Delta$, hence the invariant is maintained.

    \item[Invariant \ref{Loop Invariant: Connectivity}.]  
    We show that all states $q$ in $\DDD_w$ are reachable from $q_0$ in $\DDD_w$ after the actions by distinguishing the following three cases: (1) $q$ is a previously existing state with $q\neq p$, (2) $q\in \{p, s\}$, and (3) $q=v'$.

    For (1), let $q\neq p$ be a previously existing state different from $p$ and let $P$ be a directed path from $q_0$ to $q$ in $\DDD_w$ before the actions. Such a path exists as by assumption every state is reachable from $q_0$ in $\DDD_w$ before the actions. Now, note that the only transitions removed from $\delta_w$ are the transitions $(x, a, p)$ such that $u<x$ through Action~\eqref{item: case 2 move transitions}. Hence, if $P$ does not contain any such transition, the path $P$ still exists after the actions and nothing is to be shown. Thus, assume $P$ to contain some transition $(x, a, p)$ and let $y$ be the successor of $p$ on the path $P$, say through a transition $(p, c, y)$. Action~\eqref{item: case 2 move transitions} inserts the transition $(x, a, s)$ into $\delta_w$ and Action~\eqref{item: case 2 copy p} inserts the transition $(s, c, y)$ into $\delta_w$. Hence the state $q$ is now reachable from $q_0$ through the path $P'$ that is obtained by replacing such transitions $(x, a, p), (p, c, y)$ with $(x, a, s), (s, c, y)$. This shows (1). 
    For (2), notice that $\pred(u, a)\neq\succs(u,a)$ and thus we cannot have $\pred(u, a) = p = \succs(u,a)$. We assume that $p\neq \pred(u, a)$, the case $p\neq \succs(u,a)$ is symmetric. In this case $\pred(u, a)$ is a previously existing state different from $p$ and by (1) $\pred(u, a)$ is reachable from $q_0$ and then so is $p$ via the transition $(\pred(u, a), a, p)$. It remains to argue that $s$ is reachable from $q_0$. Notice that the transition $(\succs(u, a), a, s)$ is contained in $\delta_w$ after Action~\eqref{item: case 2 move transitions} and thus it suffices to argue that $\succs(u, a)$ is reachable from $q_0$. If $\succs(u, a)=p$ this follows from the previous argument, if $\succs(u, a)\neq p$, it follows from (1).
    For (3), note that the new state $v'$ is reachable from $q_0$ via a path to the state $u$ (that exists due to (1) and (2)) and the transition $(u, a, v')$ that is inserted through Action~\eqref{item: case 2 insert u a vp}.
    
    \item[Invariant \ref{Loop Invariant: wheeler order}.] 
    Clearly, $q_0$ does not get a new incoming transition in $\delta_w$ as the only states that get new in-transitions in $\delta_w$ are $v'$ and $s$. Furthermore, by Invariant~\ref{Loop Invariant: Connectivity}, all states are reachable from $q_0$ and hence all states different from $q_0$ have an in-transition. We conclude that $q_0$ is still the only state without incoming transitions in $\DDD_w$.

    (W1) The actions remove some (but not all) of the incoming transitions of $p$, hence $\lambda_{\DDD_w}(p)$ does not change. The two new states $v'<s$ are inserted immediately after $p$ and get new incoming transitions with label $a$. We conclude that W1 is maintained by the actions. 
    
    (W2) The actions only touch transitions labeled with $a$, so we can ignore those labeled with other letters. There are pre-existing transitions labeled with $a$ -- we call them of type (o), and three types of newly inserted transitions labeled with $a$ that the algorithm inserts through the actions: 
    (a)~Action~\eqref{item: case 2 copy p} inserts a transition $(s, a, x)$ into $\delta_w$ for every transition $(p, a, x)$ in $\delta_w$, 
    (b) Action~\eqref{item: case 2 move transitions} inserts a transition  $(x, a, s)$ into $\delta_w$ for every transition $(x, a, p)$ in $\delta_w$ with $u<x$, 
    (c)~Action~\eqref{item: case 2 insert u a vp} inserts the transition $(u, a, v')$ into $\delta_w$.
    
    Let $(y, a, y')$ and $(z, a, z')$ be two transitions after the actions, with $y<z$. We need to prove that $y'\le z'$. We distinguish cases based on the types (a, b, c, or o) of the two transitions. We denote these cases with (t1/t2), where (t1) is the type of transition $(y, a, y')$ and (t2) the type of transition $(z, a, z')$. This results in a total of 16 cases. 

   \ArxivOrCr{First assume that $(y, z, y')$ is of type (o). If also $(z, a, z')$ is of type (o) (Case (o/o)), there is nothing to show as the relative order of $y,z,y',z'$ in $\LEX$ are not changed. Now assume that $(z, a, z')$ is of type (a), i.e., $z=s$ (Case (o/a)). Recall that $v'$ and $s$ are inserted in this order subsequent to $p$ in $\LEX$. The assumption $y < z = s$ thus implies that $y\le p$ as none of the actions inserts an out-transition at state $v'$. If $y=p$, by determinism (Invariant~\ref{Loop Invariant: Deterministic}), we have $y'=z'$. Now assume $y<p$. The existence of  $(s, a, z')$ implies that there has been a transition $(p, a, z')$ previously, hence the invariant applied to the transitions $(y, a, y')$ and $(p, a, z')$ together with $y<p$ yields $y'<z'$. Now assume that $(z, a, z')$ is of type (b), i.e., $z'=s$ (Case (o/b)). The insertion of the transition $(z, a, s)$ implies that there was a transition $(z, a, p)$ previously. From $y < z$ we thus conclude $y'\le p\le s=z'$ as $s$ is inserted in $\LEX$ after $p$. Finally assume that $(z, a, z')$ is of type (c) (Case~(o/c)), we have $z=u$, $z'=v'$ and $y<z=u$. By definition $\pred(u, a)$ is the largest state in $\LEX$ preceding $u$ with outgoing label $a$. From $y<u$ it thus follows that $y \le \pred(u, a)$. If $y=\pred(u, a)$, it follows that $y'=p$ and hence $y'=p\le v'=z'$ as $v'$ is inserted after $p$ in $\LEX$. If instead $y<\pred(u, a)$, W2 before the actions for the two transitions $(y, a, y')$ and $(\pred(u, a), a, p)$ implies that $y'\le p$. As $v'$ is inserted in $\LEX$ after $p$, we get $y'\le v'=z'$.
    
    Now assume that $(y, z, y')$ is of type (a), i.e., $y=s$. This means that there was previously a transition $(p, a, y')$. If $(z, a, z')$ existed previously (Case~(a/o)), it follows that $p < s < z$ and thus $y'\le z'$. Case~(a/a) cannot occur as $y<z$ contradicts $y=z=s$. If $(z, a, z')$ is of type (b) (Case~(a/b)), we have $z'=s$ and the insertion of the transition $(z, a, s)$ implies that there was a transition $(z, a, p)$ previously. From $p < s = y < z$ we thus conclude $y'\le p\le s=z'$ as $s$ is inserted in $\LEX$ after $p$. If $(z, a, z')$ is of type (c) (Case~(a/c)), we have $z=u$, $z'=v'$ and $s<u$. The latter implies that $p<u$ as $p\le s$. By definition $\pred(u, a)$ is the largest state in Wheeler order smaller than $u$ with outgoing label $a$. From $p<u$ it thus follows that $p < \pred(u, a)$. Now, W2 before the actions for the two transitions $(p, a, y')$ and $(\pred(u, a), a, p)$ implies that $y'\le p$. As $v'$ is inserted in $\LEX$ after $p$, we get $y'\le v'$.

    Now assume that $(y, a, y')$ is of type (b), i.e., $y'=s$ and $y > u$. The insertion of $(y, a, s)$ implies that there has previously been a transition $(y, a, p)$ in $\delta_w$. If $(z, a, z')$ is of type (o) (Case~(b/o)), we thus get $y'=p\le z'$ by the invariant. If $(z, a, z')$ is of type (a), i.e., $z=s$ (Case~(b/a)), we have $z=s$ and $y<s$ and the insertion of $(s, a, z')$ implies that there has previously been a transition $(p, a, z')$ in $\delta_w$. Notice that $u < y < z = p$ implies that $p \le z'$. Recall that $p<v'<s$ are subsequent states in $\LEX$. If $z'=p$, we get a contradiction as a transition $(z, a, p)$ is deleted from $\delta_w$ by Action~\eqref{item: case 2 move transitions}. Hence $p < z'$. We also have $z'\neq v'$ as that would imply that $(z, a, z')$ is of type (c). Hence $z'\ge s =y'$.
    Case~(b/b) is trivial as $y'=s=z'$. 
    If $(z, a, z')$ is of type (c) (Case~(b/c)), we have $z=u$, $z'=v'$, and $y < z = u$. This case cannot occur as a transition $(y, a, s)$ of type (b) is inserted only for $y > u$, contradicting $y < u$. 
    
    Now assume that $(y, a, y')$ is of type (c), i.e., $y=u$, $y'=v'$, and $u < z$. Furthermore, assume that $(z, a, z')$ is of type (o) (Case~(c/o)). As $u < \succs(u, a)\le z$, the invariant yields $s\le z'$. This immediately implies $y'=v'\le z'$ as $v'$ is inserted before $s$ in $\LEX$. Now assume that $(z, a, z')$ is of type (a), i.e., $z=s$ (Case~(c/a)). It turns out that this case never occurs as the assumption $u=y<z=s$ yields that $u\le p$ as $p$ immediately precedes $s$ in $\LEX$. The definition of $p$ however entails that $p<u$. Now assume that $(z, a, z')$ is of type (b), i.e., $z'=s$ (Case~(c/b)). This case is trivial as $y'\le z'$ follows from $y'=v'$ being inserted before $z'=s$ in $\LEX$. Case~(c/c) cannot occur as there is a single transition of type (c) that gets introduced.}{
    Out of those 16 cases, 4 cases ((a/a), (b/c), (c/a), (c/c)) cannot occur (e.g., because there is a single transition of type (c) or because they contradict a previous assumption on the order, e.g., $p<u$ implies that (c/a) does not occur. Due to space limitations we omit the case distinction here and point the interested reader to the full version~\cite{arxiv}. 
    }
    \item[Invariants \ref{Loop Invariant: MN-equiv} and  \ref{Loop Invariant: language}.]
    By the invariants, $\cong$ and $\equiv$ are the same relation, and $\LLL(\DDD_w^\Delta) = \LLL(\DDD)$ before the actions. 
    Since we modify the structure of $\DDD_w^\Delta$ by performing only actions mentioned in Lemma~\ref{lem: modify}~\eqref{item:  move transition} and \eqref{item: copy out-transitions}, after the actions $s\equiv p$ and $v' \equiv v$ hold, the relation $\equiv$ does not change apart from the addition of $s$ and $v'$, and the language of $\DDD_w^\Delta$ does not change. Since Action~\eqref{item: case 2 copy p} sets $[s]_{\cong} \coloneqq [p]_{\cong}$ and Action~\eqref{item: case 2 copy v} sets $[v']_{\cong} \coloneqq [v]_{\cong}$, we also conclude that $\cong$ and $\equiv$ are the same relation after the actions.   
    
    \item[Invariant \ref{Loop Invariant: min order}.]
    Let $u'$ be a state adjacent to $p=s$ before the actions. By the invariant, if $u' \equiv p=s$, then $\lambda_{\DDD_w}(u') \neq \lambda_{\DDD_w}(p)$. 
    The actions split $p$ into two $\equiv$-equivalent adjacent states $p \neq s$ (variable $s$ is renamed) with $\lambda_{\DDD_w}(p) = \lambda_{\DDD_w}(s)$, and a new state $v' \equiv v$ is inserted between them: $p < v' < s$. State $u'$ is now adjacent to either $p$ or $s$; in the former case, it still holds that if $u' \equiv p$, then $\lambda_{\DDD_w}(u') \neq \lambda_{\DDD_w}(p)$. The latter case is analogous. 
    Finally, $v' \equiv v \not\equiv p \equiv s$ implies that the invariant is also maintained between $v'$ and its neighbors $p,s$.

    \item[Invariant~\ref{Loop Invariant: co-accessible}.] Let $\DDD_w^\Delta$ be the DFA before the actions were executed. Consider now the DFA that results from $\DDD_w^\Delta$ with only Action~\eqref{item: case 2 copy p} executed (in particular this DFA still contains the transition $(u, a, v)$). As all out-transitions of $p$ in $\DDD_w^\Delta$ are copied as out-transitions to $s$, it follows that $s$ can reach the same final state in this DFA as $p$. Now consider the DFA where in addition Action~\eqref{item: case 2 move transitions} was executed. We have $p\cong s$ and by Invariant~\ref{Loop Invariant: MN-equiv} $p\equiv s$. Lemma~\ref{lem: modify}~\eqref{item: move in transition} implies that after each replacement of a transition $(x, a, p)$ with a transition $(x, a, s)$, the DFA remains co-accessible. Now consider the DFA where in addition Action~\eqref{item: case 2 copy v} was executed. As all out-transitions of $v$ in $\DDD_w^\Delta$ are copied as out-transitions to $v'$, it follows that $v'$ can reach the same final state in this DFA as $v$. Finally, consider the DFA after all actions, i.e., assume that in addition the replacement of transition $(u, a, v)$ with transition $(u, a, v')$ was executed through Actions~\eqref{item: case 2 remove uav}~and~\eqref{item: case 2 insert u a vp}. Note that we have $v\cong v'$ and by Invariant~\ref{Loop Invariant: MN-equiv} $v\equiv v'$. Lemma~\ref{lem: modify}~\eqref{item: move in transition} then again implies that the DFA after these two actions is co-accessible.
    
\end{description}

\subparagraph*{Case 3: $p \not\cong v \not\cong s$ and $(p \neq s \vee s=\bot)$.} 
In other words, both the \texttt{if} conditions at lines \ref{line:if1} and \ref{line:if2} do not succeed. Summary of the algorithm's actions:
\begin{enumerate}[(i)]
    \item The transition $(u,a,v)$ is removed from $\Delta$. \label{item: case 3 remove uav}
    \item A new state $v'$ is created and inserted at position $i$ in $\LEX$, where the value of $i$ depends on whether $p=\bot$: if $p\neq \bot$ then $i = 1+\LEX^{-1}[p]$, else $i\coloneqq 1+ \max(\{j\ :\ \lambda(\LEX[j])\prec a\} \cup \{0\})$, i.e., $i-1$ is the position of the last state in $\LEX$ with incoming letter strictly smaller than $a$ or $0$ if all states in $\LEX$ have an incoming letter at least $a$.
    Then every outgoing transition $(v, c, x)\in \Delta \cup \delta_w$ of $v$ is copied as $(v', c, x)$ into $\Delta$, and the $\cong$-class of $v'$ and its ``final'' status are set to be equal to those of $v$. \label{item: case 3 copy v}
    \item The transition $(u,a,v')$ is inserted into $\delta_w$. \label{item: case 3 insert u a vp}
\end{enumerate}

\begin{description}
    \item[Invariant \ref{Loop Invariant: Deterministic}.] 
    The node $v'$ that is inserted by Action~\eqref{item: case 3 copy v} inherits its out-transitions from $v$ and hence determinism is maintained at $v'$.  Besides this change Action~\eqref{item: case 3 insert u a vp} inserts  $(u,a,v')$, but as Action~\eqref{item: case 3 remove uav} removes $(u, a, v)$ from $\Delta$, also this change maintains determinism of $\DDD_w^\Delta$.

    \item[Invariant \ref{Loop Invariant: Connectivity}.] 
    By the invariant, all states that exist before the actions are reachable from $q_0$ in  $\DDD_w$. These states are still reachable after the actions as no transitions are removed from $\delta_w$. The newly inserted state $v'$ is reachable from $u$ by the $(u, a, v')$ that is inserted into $\delta_w$ and $u$ is a state existing previous to the actions and thus reachable as argued before.
    
    \item[Invariant \ref{Loop Invariant: wheeler order}.] 
    Clearly, $q_0$ does not get a new incoming transition in $\delta_w$ as the only state that gets a new incoming transitions in $\delta_w$ is the new state $v'$. Furthermore, the new state $v'$ does get an incoming transition $(u, a, v')$ (Action~\eqref{item: case 3 insert u a vp}) and no other changes are made to $\delta_w$. We conclude that $q_0$ is still the only state without incoming transitions in $\DDD_w$.

    (W1) The actions do not remove any transitions and hence $\lambda_{\DDD_w}(x)$ does not change for any previously existing state $x$. The position $i$ in $\LEX$ of the new state $v'$ depends on whether $p=\bot$. If $p\neq \bot$, Action~\eqref{item: case 3 copy v} sets $i=1 + \LEX^{-1}[p]$ and thus $v'$ gets inserted immediately after $p$ with $\lambda_{\DDD_w}(p) = a$. If $p=\bot$, $i-1$ is the position of the last state in $\LEX$ with incoming letter strictly smaller than $a$ or $0$ if all states in $\LEX$ have an incoming letter at least $a$. We conclude that W1 is maintained by the actions. 

    (W2) The actions only touch transitions labeled with $a$, so we can ignore those labeled with other letters. There is a single newly inserted transition, namely Action~\eqref{item: case 3 insert u a vp} inserts the transition $(u, a, v')$ into $\delta_w$.    
    Let $(y, a, y')$ and $(z, a, z')$ be two transitions after the actions, with $y<z$. We need to prove that $y'\le z'$. There are two cases (I)~both $(y, a, y')$ and $(z, a, z')$ have already existed before the actions, (II)~exactly one out of the two transitions is the newly inserted transition $(u, a, v')$.

    In case (I), there is nothing to show as the relative order of $y,z,y',z'$ in $\LEX$ are not changed. 
    In case (II), let us assume that $(y, a, y')$ is newly inserted (the other case is symmetric), i.e., $y=u$, $y'=v'$, and $u<z$. We have to show that $v'\le z'$. As $(z, a, z')$ was a previously existing transition and $u<z$, it follows that $\succs(u, a)\neq\bot\neq s$. Moreover, by Action~\eqref{item: case 3 copy v}, it follows that $v'$ is inserted in $\LEX$ before $s$, i.e., $v'<s$. Furthermore, by definition $\succs(u,a)\le z$. 
    Now, if (1)~$\succs(u,a) = z$, then $s = z'$ (by determinism) and as $v'$ is inserted in $\LEX$ before $s$, $v' < s =z'$.
    If however (2)~$\succs(u,a) < z$, together with the existence of the two transitions $(\succs(u, a), a, s)$ and $(z, a, z')$ before the actions we get $s\le z'$. The fact that $v'$ is inserted in $\LEX$ before $s$ yields $v\le z'$.
    
    \item[Invariants \ref{Loop Invariant: MN-equiv} and \ref{Loop Invariant: language}.]
    Before the actions, by Invariant \ref{Loop Invariant: MN-equiv} it holds that $\equiv$ and $\cong$ are the same relation.
    Action~\eqref{item: case 3 copy v} creates a state $v'$ with the same outgoing transitions as $v$ and with its same ``final'' status. By Lemma~\ref{lem: modify}~\eqref{item: copy out-transitions}, 
    $v' \equiv v$ and the relation $\equiv$ remains unchanged, except for the addition of $v'$. 
    As Action~\eqref{item: case 3 copy v} sets $[v']_{\cong} = [v]_{\cong}$, this results in $\equiv$ and $\cong$ being identical again.
    Finally, by Lemma~\ref{lem: modify}~\eqref{item: move transition}, after the replacement of transition $(u,a,v)\in \Delta$ (by Action~\eqref{item: case 3 remove uav}) with transition $(u, a, v')\in \delta_w$ (through Action~\eqref{item: case 3 insert u a vp}), the equivalence relation $\equiv$ and the language $\LLL(\DDD_w^\Delta)$ do not change. This proves that both invariants are maintained. 
    
    \item[Invariant \ref{Loop Invariant: min order}.] 
    Through the actions, all previously existing states maintain their $\equiv$-equivalence class in $\DDD_w^\Delta$ (as established above) and their incoming labels in $\DDD_w$. Action~\eqref{item: case 3 copy v} inserts the new state $v'$ at position $i$, where $i$ is as described in the action above. We thus only need to check that the property holds (1) between positions $i-1$ and $i$ (for $i$ with $2\le i\le |\LEX|$) as well as (2) between positions $i$ and $i + 1$ (for $i$ with $1\le i\le |\LEX| - 1$). 
    We start with statement (1) and consider two cases depending on whether $p=\bot$. If $p\neq\bot$, we have that $\LEX[i - 1] = p$. The statement then follows from the fact that $[p]_{\cong} \neq [v]_{\cong} = [v']_{\cong}$ in this case and $\cong$ and $\equiv$ are the same equivalence relation (Invariant~\ref{Loop Invariant: MN-equiv}). Now assume that $p = \bot$. Then by Action~\eqref{item: case 3 copy v}, $i-1$ is the position of the last state in $\LEX$ with incoming letter strictly smaller than $a$ or $0$ if all states in $\LEX$ have an incoming letter at least $a$. As $i\ge 2$, the case $i-1 = 0$ cannot occur, otherwise we have $\lambda_{\DDD_w}(\LEX[i-1])\neq \lambda_{\DDD_w}(\LEX[i])$ and the statement holds.
    For~(2) we consider two cases depending on whether $s = \bot$. If $s\neq\bot$, we have that $\LEX[i + 1] = s$. The statement then follows from the fact that $[s]_{\cong} \neq [v]_{\cong}=[v']_{\cong}$ in this case and $\cong$ and $\equiv$ are the same equivalence relation (Invariant~\ref{Loop Invariant: MN-equiv}). Now assume that $s = \bot$. We are left to show that $\lambda_{\Delta_w}(\LEX[i + 1])\neq a$ in this case. We do so by arguing that $\LEX[i + 1]$ cannot have an in-transition on letter $a$. As $s=\bot$, we have $\succs(u, a)= \bot$ and thus there is no state $x$ with $u < x$ that has an outgoing transition with letter $a$. Hence an in-transition at $\LEX[i + 1]$ would have to come from a state $x$ with $x\le u$. Wheeler axiom (W2) (using Invariant~\ref{Loop Invariant: wheeler order}) and the existence of $(u, a, v')\in \delta_w$ imply that there is no state $x$ with $x<u$ such that $(x, a, \LEX[i + 1])$. Finally, due to determinism (Invariant~\ref{Loop Invariant: Deterministic}) there is no transition $(u, a, \LEX[i + 1])$. Hence $\lambda_{\Delta_w}(\LEX[i + 1])\neq a = \lambda_{\Delta_w}(\LEX[i])$ and this completes the proof.

    \item[Invariant~\ref{Loop Invariant: co-accessible}.] 
    Let $\DDD_w^\Delta$ be the DFA before the actions were executed. Consider now the DFA that results from $\DDD_w^\Delta$ with only Action~\eqref{item: case 3 copy v} executed (in particular this DFA still contains the transition $(u, a, v)$). As all out-transitions of $v$ in $\DDD_w^\Delta$ are copied as out-transitions to $v'$, it follows that $v'$ can reach the same final state in this DFA as $v$. Now consider the DFA after all actions, i.e., assume that in addition the replacement of transition $(u, a, v)$ with transition $(u, a, v')$ was executed through Actions~\eqref{item: case 3 remove uav}~and~\eqref{item: case 3 insert u a vp}. Note that we have $v\cong v'$ and by Invariant~\ref{Loop Invariant: MN-equiv} $v\equiv v'$. Lemma~\ref{lem: modify}~\eqref{item: move in transition} then implies that the DFA after these two actions is co-accessible.
\end{description}

\end{document}